\tikzset{>=latex}
\def\vect#1{\mbox{\boldmath{$#1$}}}
\def\Ac{{\cal A}}
\def\Nc{{\cal N}}
\def\Fc{{\cal F}}
\def\vx{\vec{\vect x}}
\def\vy{\vec{\vect y}}
\def\vz{{ \vec{\vect z}}}
\def\mC{\mathbb{C}}
\def\Cc{{\cal C}}
\newcommand{\cC}{\mathcal{C}}
\def\wG{G}
\newcommand{\bz}{\vect z}
\def\wg{g}
\newcommand{\argmin}{\mathop{\mbox{argmin}}}
\newcommand{\eps}{\varepsilon}
\renewcommand{\epsilon}{\eps}
\renewcommand{\leq}{\leqslant}
\renewcommand{\geq}{\geqslant}
\def\bfrho{\mbox{\boldmath$\rho$}}
\def\bfb{\mbox{\boldmath$b$}}
\def\bfeta{\mbox{\boldmath$\eta$}}
\def\bfd{\mbox{\boldmath$d$}}
\def\bfchi{\mbox{\boldmath$\chi$}}
\newtheorem{theorem}{Theorem}
\newtheorem{remark}{Remark}
\newenvironment{proof}{\paragraph{{\bf Proof:}}}{\hfill$\square$}
\title{Fast signal recovery from quadratic measurements}
\author{Miguel Moscoso\footnote{Department of Mathematics, Universidad Carlos III de Madrid, Leganes, Madrid 28911, Spain}
, Alexei Novikov\footnote{Department of Mathematics, Pennsylvania State University, University Park, PA 16802}
, George Papanicolaou\footnote{Department of Mathematics, Stanford University, Stanford, CA 94305}
, Chrysoula Tsogka\footnote{Department of Applied Mathematics, University of California, Merced, CA 95343}
}
\begin{document}

\maketitle
\begin{abstract}
We present a novel approach for recovering a sparse signal from cross correlated data. The 
bottleneck for inversion in this case is the number of unknowns that grows quadratically, resulting 
in a prohibitive computational cost  as the dimension of the problem increases. 
The main feature of the proposed approach is that its cost is similar to the one of the usual  
sparse signal recovery problem with linear measurements. The keystone of the methodology is the use of a {\em Noise Collector}
that absorbs the data component that comes from the off-diagonal elements of the unknown matrix formed by the cross correlation of the original unknown vector. The data component that is absorbed this way does not carry extra information about the support of the signal and, thus, we can 
safely reduce the dimensionality of the problem making it competitive with respect to the one that uses linear data.
Our theory shows that the proposed approach provides exact support recovery when the data is not too noisy and that there are no false positives for any level of noise. Moreover, our theory also demonstrates that
 when using cross correlated data, the level of sparsity that can be recovered increases, scaling almost linearly with the number of data.
The numerical experiments presented in the paper
corroborate these findings.

\end{abstract}
\vspace{2pc}
\noindent{\it Keywords}: 
quadratic data, $\ell_1$-minimization, noise, model reduction


\maketitle
\section{Introduction}
Reconstruction of signals from 
cross correlations 
has interesting applications in many fields of science and engineering  such as optics, quantum mechanics, electron microscopy, antenna testing, seismic interferometry, or imaging in general \cite{Garnier16, Griffiths07, wap10, schuster}. 
Using cross correlations of measurements collected at different locations  
presents several advantages since the inversion does not require knowledge of the emitter positions, or the probing pulses shapes as only time differences matter.
Cross correlations have been used, for example,  when imaging is carried out with opportunistic sources whose properties are mainly unknown \cite{Garnier09,Garnier14,Daskalakis16,Helin18}.

In many applications, we seek information about an object or a signal $\bfrho\in\cC^{K}$ given  
data $\bfb\in\cC^{N}$ most often related through a linear transformation 
\begin{equation}
\label{eq:invprob}
\Ac \,\bfrho = \bfb \, ,
\end{equation}
where $\Ac\in\cC^{N\times K}$ is the measurement or model matrix. When the signal $\bfrho$ is compressed or when the data is scarce, $N < K$, in which case (\ref{eq:invprob}) is underdetermined and infinitely many signals or objects match the data. However, if the signal  $\bfrho$ is sparse so only  
$M\ll K$ components are different than zero, $\ell_1$-minimization algorithms that solve
\begin{equation}\label{l1normsol}
\bfrho_{\ell_1}=\argmin \|\vect \rho\|_{\ell_1}, \hbox{ subject to } \Ac \vect \rho= \vect b\,
\end{equation}
can recover the true signal efficiently even when $N\ll K$.

On the other hand, there are situations in which it is difficult or impossible to record high quality data, $\bfb$, and it is more convenient to use the cross correlated data contained in the matrix 
\begin{equation}
 \label{eq:cross0}
B =  \bfb\,\bfb^*\in\cC^{N\times N}\,  
\end{equation}
to find the desired information about the object or signal $\bfrho$ (see \cite{Garnier09} and references therein).
One way to address this problem is to lift it 
to the matrix level and reformulate it as a low-rank matrix linear system, 
which can be solved by using nuclear norm minimization as it was suggested in \cite{CMP11,Candes13} for imaging with intensities-only. This makes the problem convex over the appropriate matrix vector space and, thus, the unique true solution can be found using well established algorithms involving 
are not as efficient as $\ell_1$-minimization algorithms \textcolor{black}{that only involve lightweight operations such as matrix-vector multiplications \cite{Beck09}}. Furthermore, the big caveat 
is that the computational cost rapidly becomes prohibitively large as the dimension of the problem increases 
quadratically with $K$, making its solution infeasible.


In this paper we suggest a different approach. We propose to consider the linear matrix equation 
\begin{equation}
 \label{eq:modelC-intro}
 \Ac X \Ac^* = B\,  
\end{equation}
for the correlated signal $X=\bfrho\,\bfrho^*\in\cC^{K\times K}$,  vectorize both sides so
\begin{equation}
 \label{eq:vec00}
 \mbox{vec}(\Ac X \Ac^*) = \mbox{vec}(B)\, , 
\end{equation}
and use the Kronecker product $\otimes$, and its property $\mbox{vec}(PQR)=(R^T\otimes P)\mbox{vec}(Q)$, to express the matrix multiplications as the linear transformation
\begin{equation}
 \label{eq:vec1-intro}
 ({\bar \Ac} \otimes \Ac)\,\mbox{vec}( X ) = \mbox{vec}(B)\, .
\end{equation}
Thus, we can promote the sparsity of the sought image using $\ell_1$-minimization algorithms that are much faster than  nuclear norm minimization ones. 
However, the dimension of the unknown  $\mbox{vec}( X )$ in (\ref{eq:vec1-intro}) also increases quadratically with $K$, so this approach by itself would still be impractical when $K$ is not very small. 

Hence, we propose to use a {\em Noise Collector} to reduce the dimensionality of problem  (\ref{eq:vec1-intro}). The  {\em Noise Collector} was introduced in \cite{Moscoso20b} to eliminate the clutter in the recovered signals when the data are contaminated by additive noise. In this paper, we use the Noise Collector to absorb part of the signal instead. Specifically, we treat as noise the signal 
 that corresponds to the $K^2 -K$ off-diagonal entries in the matrix $X$. Using the {\em Noise Collector} allows us to ignore these 
entries and construct a linear system with the same number of unknowns as the original problem (\ref{eq:invprob}) that uses linear data. 
As a consequence a dimension reduction from $K^2$ to $K$ unknowns is achieved. 
The main result of this paper is Theorem \ref{d-reduce} which says that under certain decoherence conditions on the matrix $\Ac$, we can find the support of an M-sparse signal exactly if the data is noise-free or the noise is low enough.  Furthermore, Theorem \ref{d-reduce} shows that the level of sparsity $M$ that can be recovered increases from  $O(\sqrt{N}/\sqrt{\ln N})$ to $O(N/\sqrt{\ln N})$ when quadratic cross correlation data are used instead of the linear ones.

The numerical experiments included in this paper support the results of Theorem \ref{d-reduce}. They show that the support of a signal can be found exactly if the noise in the data is not too large with almost no extra computational cost with respect to the original  problem (\ref{eq:invprob}) that considers linear data with no correlations. Once the support has been found, a trivial second step allows us to find the signal, including its phases. The reconstruction is exact when there is no noise in the data and the results are very satisfactory even for noisy data with low signal to noise ratios. 
That is, our numerical experiments suggest that the approach presented here is robust with respect to additive noise.
Additional properties of this approach are that for any level of noise the solution has no false positives, and that the algorithm is parameter-free, so it does not require an estimation of the energy of the {\em off-diagonal signal} that we need to absorb, or of the level of noise in the data.

The paper is organized as follows. In Section \ref{sec:passive}, we summarize the model used 
to generate the signals to be recovered, which in our case are images. In Section \ref{sec:NC}, we present the theory that supports the proposed strategy for dimension reduction  when correlated data are used to recover the signals. Section \ref{sec:algo} explains the algorithm for carrying out the inversion efficiently. Section \ref{sec:numerics} shows the numerical experiments. Section \ref{sec:conclusions}
summarizes our conclusions. The proofs of the theorems are given in \ref{sec:proofs}. 

\section{Passive array imaging}
\label{sec:passive}
We consider processing of passive array signals where the object to be imaged is a set of point sources at positions $\vz_{j}$ and  
(complex)
amplitudes $\alpha_j$,
$j=1,\dots,M$.  The data used to  image the object are collected at several sensors on an array; see Figure \ref{fig:setup}. 
The imaging system is characterized by the array aperture $a$, the distance $L$ to the sources, the bandwidth $B$ 
and the central wavelength $\lambda_0$ of the signals.      
\begin{figure}[htbp]
\includegraphics[scale=1]{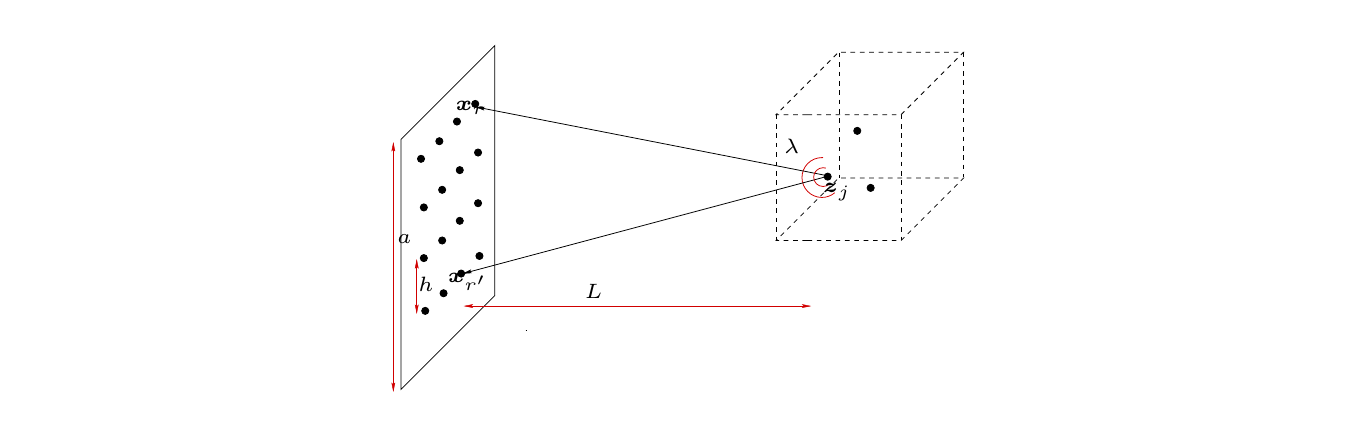}
\caption{General setup for passive array imaging. The source at $\vz_j$  emits a signal that is recorded at all array elements $\vx_r$, $r=1,\ldots,\textcolor{black}{N_r}$.}
\label{fig:setup}
\end{figure}

The sources are located inside an image window IW discretized with a uniform grid of points $\vy_k$, $k=1,\ldots,K$. Thus,
the signal to be recovered 
is the source vector 
\begin{equation}
\label{eq:rhotilde}
\vect {\tilde \rho}=[{\tilde \rho}_{1},\ldots,{\tilde \rho}_{K}]^{\intercal}\in\mathbb{C}^K\,,
\end{equation}
whose components ${\tilde \rho}_k$ correspond to the  amplitudes of the $M$ sources at the grid points $\vy_k$, $k=1,\ldots,K$, with $K\gg M$. This vector has components
${\tilde\rho}_{k} = \alpha_j$ if  $\vy_{k} = \vz_j$ for some  $j=1,\ldots,M$, while the others  are zero. 

Denoting by $\wG(\vx,\vy;\omega)$ the Green's function for the propagation of a 
wave of angular frequency $\omega$ from point $\vy$ to point $\vx$, we define the single-frequency 
Green's function vector that connects a point $\vy$ in the IW with all the sensors on the array located at points 
$\vx_r$, $r = 1,\ldots,\textcolor{black}{N_r}$, so
$$
\vect \wg(\vy;\omega)=[\wG(\vx_{1},\vy;\omega), \wG(\vx_{2},\vy;\omega),\ldots,
\wG(\vx_{N},\vy;\omega)]^{\intercal} \in \mathbb{C}^{\textcolor{black}{N_r}} \,.
$$
In three dimensions,
$
{\displaystyle
\wG(\vx,\vy;\omega)=  \frac{\exp\{i \omega|\vx-\vy|/c_0\}}{4\pi|\vx-\vy|} } 
$ 
if the medium is homogeneous. Hence, the  signals of frequencies $\omega_l$ recorded at the sensors locations $\vx_r$ are
$$b(\vx_r, \omega_l)=\sum_{j=1}^M\alpha_j  \wG (\vx_r,\vz_j; \omega_l)\, ,\quad r = 1,\ldots,\textcolor{black}{N_r}\, .
$$
They form the single-frequency data vector $\vect b(\omega_l) =  [ b(\vx_1, \omega_l), b(\vx_2, \omega_l),\dots,b(\vx_N, \omega_l)]^{\intercal}\in \mathbb{C}^{\textcolor{black}{N_r}}$.
As several frequencies $\omega_l$, $l=1,\dots,N_f$, are used to recover (\ref{eq:rhotilde}), all the recorded data are stacked in the multi-frequency  column data vector
\begin{equation}
\vect b = [ \vect  b(\omega_1)^{\intercal},\vect  b(\omega_2)^{\intercal},\dots, \vect b(\omega_{N_f})^{\intercal}]^{\intercal} 
\in \mathbb{C}^{\textcolor{black}{N}} \, , \mbox{with} \, \textcolor{black}{N=N_r N_f}\,.
\label{eq:data}
\end{equation}

\subsection{The inverse problem with linear data}
When the data (\ref{eq:data}) are available and reliable, 
one can form the linear system
\begin{equation}
\Ac\,\bfrho = \bfb
\label{eq:system}
\end{equation}
to recover (\ref{eq:rhotilde}). Here, $ \Ac$ is the $\textcolor{black}{N}\times K$ measurement matrix whose columns $\vect a_k$ are the multi-frequency Green's function vectors
\begin{equation}\label{eq:ak}
\vect a_k   = \frac{1}{c_k}\, 
[ \vect\wg(\vy_k ; \omega_1)^{\intercal},  \vect\wg(\vy_k; \omega_2)^{\intercal}, \dots, \vect\wg(\vy_k; \omega_{N_f})^{\intercal} ]^{\intercal} \in \mathbb{C}^{\textcolor{black}{N}}\, ,
 \end{equation}
where $c_k$ are scalars that normalize these vectors to have $\ell_2$-norm one, and 
\begin{equation}\label{eq:rho}
\bfrho=\mbox{diag}(c_1, c_2,\dots,c_K)\,\vect {\tilde \rho},
 \end{equation}
 where $\vect {\tilde \rho}$ is given by (\ref{eq:rhotilde}).
Then, one can solve (\ref{eq:system}) for the unknown vector $\bfrho$ using a number of $\ell_2$ and $\ell_1$ inversion methods to find the sought image. 
In general,  $\ell_2$  methods are robust but the resulting resolution  
is low. On the other hand,  $\ell_1$ methods provide higher resolution but they are much more sensitive to noise in the data. Hence, they cannot be used with poor quality data unless one  carefully takes care of the noise.

\subsection{The inverse problem with quadratic cross correlation data}
In many instances, imaging with cross correlations helps to form better and more robust images. 
This is the case, for example, when one uses high frequency signals and  has a  low-budget measurement system with inexpensive sensors that
are not able to resolve the signals well. Another situation is when the raw data (\ref{eq:data}) can be measured but it is more convenient to image with cross correlations because they help to mitigate the effects of the inhomogeneities of the  medium between the sources and the sensors \cite{bakulin,Garnier15}

Assume that all the cross correlated data  contained in the matrix
\begin{equation}
B =  \bfb\,\bfb^*\in\cC^{N\times N}\, 
 \end{equation}
 are  available for imaging. Then, one can consider the linear system 
\begin{equation}
 \label{eq:modelC}
 \Ac X \Ac^* = B\, ,
 \end{equation}
and seek the correlated image $X=\bfrho\,\bfrho^*\in\cC^{K\times K}$ that solves it. The unknown matrix $X$ is  rank 1 and, hence, one possibility is to look for a low-rank matrix by using nuclear norm minimization as it was suggested for imaging with intensities-only in \cite{CMP11,Candes13}. This is possible in theory, but it is unfeasible when the problem is large because the number of unknowns grows quadratically and, therefore, the computational cost rapidly becomes prohibitive. For example, to form an image with $1000\times 1000$ pixels one would have to solve a system with $10^{12}$ unknowns.

Instead, we suggest the following strategy. We propose to vectorize both sides of (\ref{eq:modelC}) so
\begin{equation}
 \label{eq:vec0}
 \mbox{vec}(\Ac X \Ac^*) = \mbox{vec}(B)\, , 
\end{equation}
where $\mbox{vec}(\cdot)$ denotes the vectorization of a matrix formed by stacking its columns into a single column vector. Then, we
use the Kronecker product $\otimes$, and its property $\mbox{vec}(PQR)=(R^T\otimes P)\mbox{vec}(Q)$, to express the matrix multiplications as the linear transformation
\begin{equation}
 \label{eq:vec1}
 ({\bar \Ac} \otimes \Ac)\,\mbox{vec}( X ) = \mbox{vec}(B)\, .
\end{equation}
With this formulation of the problem we can use an $\ell_1$ minimization algorithm to form the images, which is much faster
than a nuclear norm minimization algorithm that needs to compute the SVD  of the 
iterate  matrices. 
However, with just this approach the main obstacle is not overcome, as the dimensionality 
still grows quadratically with the number of unknowns $K$. Hence, we propose  a dimension reduction strategy that uses a {\em Noise Collector} \cite{Moscoso20b} to absorb a component of the data vector that does not provide extra information about the signal support. We point out that this component is not a gaussian random vector as in \cite{Moscoso20b}, but a deterministic vector resulting from the off-diagonal terms of $X$ that are neglected.

\section{The noise collector and  dimension reduction }
\label{sec:NC}


\subsection{The Noise Collector}
\label{subsec:PlainNC}

The {\em Noise Collector}~\cite{Moscoso20b} is a method to find  the vector $ {\mbox{\boldmath$\chi$}}  \in \mathbb{C}^{\cal K}$ in
\begin{equation}
\label{eq:nc_general}
 T \, \vect \chi = \vect d_0 + \vect e \,,
\end{equation}
from highly incomplete measurement data $\vect d = \vect d_0 + \vect e \in \mC^{\cal N}$ possibly
corrupted by  noise  $\vect e \in \mC^{\cal N}$, where $ 1 \ll {\cal N} < {\cal K}$.  Here, $T$ is a general measurement matrix of size
${\cal N}\times {\cal K}$, whose columns have unit length. The main results in~\cite{Moscoso20b} ensure that we can still
recover the support of  $\vect \chi$ when the data is noisy by looking at the support of  $\vect \chi_{\tau}$ found as 
\begin{equation}\label{rho_tt}
\left( \vect \chi_{\tau}, \vect \eta_{\tau} \right) = 
\arg\min_{ \small \vect \chi, \small \vect \eta} \left(    \tau  \| \vect \chi \|_{\ell_1} +  \| \vect \eta \|_{\ell_1}  \right),
 \hbox{ subject to } T \vect \chi + \Cc \vect \eta =\vect d, 
\end{equation}
with an {\color{black} $O(1)$} no-phatom weight $\tau$, and a {\it Noise Collector} matrix  $\cC \in \mC^{{\cal N}\times \Sigma}$ with $\Sigma = {\cal N}^\beta$, for  $\beta>1$.
If  the noise $\vect e$ is Gaussian, then
the columns of $\Cc$ can be chosen independently and at random on the unit sphere
 $\mathbb{S}^{{\cal N}-1}=\left\{ x \in \mathbb{R}^{{\cal N}} , \| x \|_{\ell_2} =1  \right\}$. The weight $\tau>1$ is chosen so it is expensive to approximate $\vect e$ with the columns
of $T$, but it cannot be taken too large because then we loose the signal $\vect \chi$ that gets absorbed by the {\it Noise Collector} as well. Intuitively,  $\tau$
is a measure of the rate at which the signal is lost as the noise increases. For practical purposes, $\tau$ is chosen as the minimal value for which $\vect \chi = 0$
when the data is pure noise, i.e., when $\vect d_0=0$. The key property is that the optimal value of $\tau$ does not depend on the level of
noise and, therefore, it is chosen in advance, before the {\it Noise Collector} is used for a specific task. We have the following result.

\begin{theorem}\label{pnas}~\cite{Moscoso20b} 
Fix $\beta>1$,  and draw  $\Sigma={\cal N}^{\beta}$  columns to form the Noise Collector  
$\Cc$, independently, from the  uniform distribution on 
$\mathbb{S}^{{\cal N}-1}$.  Let $\vect \chi$ be an $M$-sparse solution of the noiseless system $T \vect \chi=\vect d_0$,   and $\vect \chi_\tau$  the solution 
of~(\ref{rho_tt}) with $ \vect d= \vect d_0 + \vect e$.   Denote the ratio of minimum to maximum  significant values of $\vect \chi$ as
 \begin{equation}
 \label{def:gamma}
 \gamma= \min_{i \in \mbox{supp}(\vect \chi)} \frac{|\chi_i|}{ \| \vect \chi \|_{\ell_\infty}}.
 \end{equation}
 Assume that the columns of $T$  are incoherent,  so that 
\begin{equation}\label{Mcond}
|\langle \vect t_i, \vect t_j \rangle| \leq \frac{1}{3M} \mbox{ for all } i \mbox{ and } j.
\end{equation}
Then, for any $\kappa > 0$, there are constants  $\tau=\tau(\kappa, \beta)$, $c_1=c_1(\kappa, \beta, \gamma)$, and ${\cal N}_0= {\cal N}_0(\kappa, \beta)$ such that,
if the noise level satisfies   
\begin{equation}\label{esti}
\max\left(1,  \| \vect e \|_{\ell_2}\right)
 \leq  c_1  \frac{\| \vect d_0\|_{\ell_2}^2 }{ \| \vect \chi \|_{\ell_1}}  \sqrt{\frac{{\cal N}}{ \ln {\cal N}}},
\end{equation}
 then
$\mbox{supp}(\vect \chi_{\tau}) = \mbox{supp}(\vect \chi)$ for  all ${\cal N}>{\cal N}_0$ 
with  probability  $1-1/{\cal N}^{\kappa}$. 
\end{theorem}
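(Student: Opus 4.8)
The plan is to argue through convex duality, constructing a single dual vector that certifies simultaneously that the minimizer of~(\ref{rho_tt}) cannot place mass outside $S=\supp(\vect\chi)$ and that it cannot drop any entry of $S$ into the collector. Since~(\ref{rho_tt}) is convex with one linear constraint, a feasible pair $(\vect\chi_\tau,\vect\eta_\tau)$ is optimal if and only if there is a multiplier $\vect\lambda\in\mC^{\cal N}$ obeying the subgradient relations $T^*\vect\lambda\in\tau\,\partial\|\vect\chi_\tau\|_{\ell_1}$ and $\Cc^*\vect\lambda\in\partial\|\vect\eta_\tau\|_{\ell_1}$; explicitly, $(T^*\vect\lambda)_i=\tau\,\mathrm{sign}((\chi_\tau)_i)$ on $\supp(\vect\chi_\tau)$ and $|(T^*\vect\lambda)_i|\leq\tau$ elsewhere, while $|(\Cc^*\vect\lambda)_j|=1$ on $\supp(\vect\eta_\tau)$ and $|(\Cc^*\vect\lambda)_j|\leq1$ elsewhere. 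The whole theorem reduces to producing one $\vect\lambda$ that meets these relations with the off-support inequalities \emph{strict}, so that the support of the minimizer is pinned exactly to $S$.

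First I would extract the two probabilistic facts about the random collector that drive everything, both following from the sub-Gaussian concentration of linear functionals of the uniform law on $\mathbb{S}^{{\cal N}-1}$ together with a union bound over the $\Sigma={\cal N}^{\beta}$ columns. The first is a uniform feasibility bound: for any $\vect\lambda$ chosen independently of $\Cc$,
\[
\|\Cc^*\vect\lambda\|_{\ell_\infty}\;\leq\;\|\vect\lambda\|_{\ell_2}\sqrt{\frac{2(\beta+\kappa)\ln{\cal N}}{{\cal N}}}\,\bigl(1+o(1)\bigr)
\]
with probability at least $1-{\cal N}^{-\kappa}$; this single estimate is the origin of the exponent $\beta>1$, of the failure probability ${\cal N}^{-\kappa}$, and of the threshold ${\cal N}_0$ beyond which it is sharp. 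The second is the linear-programming identity $\min_{\Cc\vect\eta=\vect v}\|\vect\eta\|_{\ell_1}=\max\{\mathrm{Re}\,\langle\vect v,\vect\lambda\rangle:\|\Cc^*\vect\lambda\|_{\ell_\infty}\leq1\}$, which combined with the first bound shows that absorbing a vector $\vect v$ into the collector costs $\asymp\|\vect v\|_{\ell_2}\sqrt{{\cal N}/\ln{\cal N}}$ in the $\ell_1$ metric. Since this cost per unit norm is $\gg\tau$ while representing a signal direction $\vect t_i$ through $T$ costs only $\tau$ per unit, the collector is never used for the structured signal but is forced to carry the unstructured residual; the no-phantom weight $\tau$ is then fixed as the $O(1)$ threshold depending only on $\kappa$ and $\beta$ at which pure noise $\vect d_0=0$ first yields $\vect\chi_\tau=0$.

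With $\tau$ fixed, the absence of false positives holds for \emph{every} noise level, and I would establish it by anchoring the certificate on the true support, $\vect\lambda=T_S(T_S^*T_S)^{-1}\,\tau\,\mathrm{sign}(\vect\chi_S)$, where $T_S$ collects the columns indexed by $S$. The incoherence hypothesis~(\ref{Mcond}) makes $T_S^*T_S$ strictly diagonally dominant, so a Gershgorin/Neumann estimate gives $\|(T_S^*T_S)^{-1}\|\leq\tfrac32$; hence $\|\vect\lambda\|_{\ell_2}\asymp\tau\sqrt{M}$ and, for $i\notin S$, $\|\vect t_i^*T_S\|_{\ell_2}\leq 1/(3\sqrt M)$ yields $|(T^*\vect\lambda)_i|\leq\tfrac12\tau<\tau$, the strict off-support inequality on the $T$-block. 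Feeding $\|\vect\lambda\|_{\ell_2}\asymp\tau\sqrt{M}$ into the uniform feasibility bound forces $\|\Cc^*\vect\lambda\|_{\ell_\infty}\leq1$ precisely when $M=O({\cal N}/\ln{\cal N})$, the regime carved out by~(\ref{Mcond}), so $\vect\lambda$ is admissible and $\supp(\vect\chi_\tau)\subseteq S$.

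Full recovery is the only step that requires the noise bound~(\ref{esti}), and I would obtain it by comparing the candidate $(\vect\chi,\vect\eta^{e})$, with the collector set to carry the residual via $\Cc\vect\eta^{e}=\vect e$, against any competitor that zeroes an entry $i\in S$ and re-absorbs $\chi_i\vect t_i$ together with the noise. Keeping the entry is cheaper exactly when $\tau|\chi_i|$ is dominated by the induced increase in collector cost, while the part of $\vect e$ in $\mathrm{span}(\vect t_j:j\in S)$ perturbs the recovered amplitudes by $(T_S^*T_S)^{-1}T_S^*\vect e$, of size $\asymp\|\vect e\|_{\ell_2}\sqrt{\ln{\cal N}/{\cal N}}$ through the same concentration estimate. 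Requiring this perturbation to stay below the smallest significant amplitude $\gamma\|\vect\chi\|_{\ell_\infty}$, and using $\|\vect d_0\|_{\ell_2}^2=\|T\vect\chi\|_{\ell_2}^2\asymp\|\vect\chi\|_{\ell_2}^2$ with $\|\vect\chi\|_{\ell_2}^2/\|\vect\chi\|_{\ell_1}\asymp\|\vect\chi\|_{\ell_\infty}$, reproduces the right-hand side of~(\ref{esti}) with a constant $c_1=c_1(\kappa,\beta,\gamma)$ that absorbs the factor $\gamma$ of~(\ref{def:gamma}). The hard part is the two-sided control of the random collector uniformly over all ${\cal N}^{\beta}$ columns—that it stays feasible for the signal-anchored $\vect\lambda$ yet cannot cheaply reproduce any signal direction—because the collector columns that become active are themselves determined by the data and the noise; reconciling this data-dependence with a certificate fixed in advance, so that the union bound applies, is where the genuine work lies, the incoherence bookkeeping on the $T$-block being routine by comparison.
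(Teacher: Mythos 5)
First, note that this paper does not actually prove Theorem~\ref{pnas}: it is imported verbatim from~\cite{Moscoso20b}, and what the appendix proves is the deterministic reformulation (Theorem~\ref{thm-ortho}) via a geometric argument --- strict containment of the projections $\vect t_j^{v}$ of off-support columns in the convex hull $H(\tau)$ generated by the on-support columns and the scaled collector, with Milman's extension of Dvoretzky's theorem supplying the fact that the hull of the $\Sigma={\cal N}^\beta$ random collector columns contains a ball of radius of order $\sqrt{\ln{\cal N}/{\cal N}}$. Your dual-certificate formulation is a legitimate alternative language for the same mechanism (the hull-containment condition is essentially the dual feasibility condition), but as written it has a gap that is not cosmetic. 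Your certificate $\vect\lambda=T_S(T_S^*T_S)^{-1}\tau\,\mathrm{sign}(\vect\chi_S)$ lies in the span of the signal columns, and your own concentration estimate then forces $\|\Cc^*\vect\lambda\|_{\ell_\infty}\lesssim\tau\sqrt{M\ln{\cal N}/{\cal N}}\ll 1$. But the KKT conditions require $|(\Cc^*\vect\lambda)_j|=1$ on $\mbox{supp}(\vect\eta_\tau)$, and $\vect\eta_\tau$ cannot vanish: the collector must absorb $\vect e$, which is generically not in the range of $T_S$. So the vector you exhibit is not a dual certificate for any feasible point of the constrained problem, and it certifies nothing. The fix --- augmenting $\vect\lambda$ by a component along the noise direction $\vect e/\|\vect e\|_{\ell_2}$ of the right magnitude, and then verifying both that the augmented vector still satisfies $|(T^*\vect\lambda)_i|<\tau$ off the support and that the collector can realize the required equality conditions --- is precisely what the paper's proof does by adjoining $\vect e$ to the signal subspace (making $V$ ($M+1$)-dimensional) and invoking Milman--Dvoretzky to place $4c_0\sqrt{\ln{\cal N}/{\cal N}}\,\vect e/\|\vect e\|_{\ell_2}$ inside $H_1(\tau_0)$. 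You correctly identify this as ``where the genuine work lies,'' but deferring it means the central step of the theorem is missing.

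A second, related gap: your claim that absorbing a vector $\vect v$ into the collector costs $\asymp\|\vect v\|_{\ell_2}\sqrt{{\cal N}/\ln{\cal N}}$ is two-sided, and only the lower bound follows from sub-Gaussian concentration plus a union bound over the ${\cal N}^\beta$ columns. The upper bound --- that the collector \emph{can} reach an arbitrary (in particular, data-dependent) direction at that cost --- is equivalent to the hull of the collector columns containing an $\ell_2$-ball, which is the Dvoretzky--Milman ingredient and is where the hypothesis $\beta>1$ enters quantitatively; it is not obtained by a union bound over fixed directions. Without it, neither the calibration of $\tau$ as an $O(1)$ constant nor the exact-recovery step (where the competitor that keeps $\chi_i$ must be able to dump the residual into the collector cheaply) goes through. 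The bookkeeping you do on the $T$-block via~(\ref{Mcond}) and Gershgorin is fine and matches the role of the incoherence hypothesis in the paper's argument, but the two items above are the substance of the proof, not loose ends.
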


To gain a better understanding of this theorem, let us consider the case where $T$ is the identity matrix (the classical denoising problem) and all coefficients of $\vect d_0= \vect \chi$ 
are either 1 or 0. Then  $\| \vect d_0 \|^2_{\ell_2}  =   \| \vect \chi \|_{\ell_1}=M$. In this case, an acceptable level of noise is
  \begin{equation}\label{thm3}
  \| \vect e \|_{\ell_2} \lesssim   \| \vect d_0 \|_{\ell_2}  \sqrt{ \frac{\cal N} {M \ln {\cal N}}} \sim  \sqrt{ \frac{\cal N} {\ln {\cal N}}}.
  \end{equation}
The estimate~(\ref{thm3}) implies that we can handle more noise as we increase the number of measurements. This holds for two reasons. Firstly,  a typical noise vector $\vect e$  is almost orthogonal 
to the columns of $T$, so
\begin{equation}\label{no-random}
 |\langle \vect t_i,\vect e \rangle | \leq c_0\sqrt{ \frac {\ln {\cal N}}{\cal N}} \| \vect e \|_{\ell_2}
\end{equation}
for some $c_0= c_0 (\kappa)$ with probability  $1-1/{\cal N}^{\kappa}$.
In particular, a typical noise vector $\vect e$  is almost orthogonal 
to the signal subspace $V$. More formally, suppose  $V$ is the $M$-dimensional subspace spanned by the column vectors $\vect t_j$ with $j$ in the support of $\vect \chi$, and let $W=V^{\perp}$ be the orthogonal complement to $V$. Consider the orthogonal decomposition  $\vect e =\vect e^{v} + \vect e^{w}$, such that $\vect e^{v}$ is in $V$ 
 and $\vect e^{w}$ is in $W$. 
 Then,
 \[
 \| \vect e^{v} \|_{\ell_2}   \lesssim  \sqrt{\frac M{\cal N}} \| \vect e \|_{\ell_2}
 \]
 with high probability that tends to $1$, as ${\cal N} \to \infty$. In Theorem~\ref{pnas}, a quantitative estimate of this convergence is $1-1/{\cal N}^{\kappa}$.
It means that if a signal is sparse so $M \ll {\cal N}$, then  we can recover it for very low signal-to-noise ratios.
Secondly, and more importantly, if the columns of the noise collector $\cC$ are also almost orthogonal to the signal subspace, then it is 
too expensive to approximate the signal $\vect d_0$ with the columns of $\cC$ and, hence, we have to use the columns of the measurement matrix $T$. 
If we draw the columns of $\cC$, independently, from the  uniform distribution on $\mathbb{S}^{{\cal N}-1}$, then they will be almost orthogonal to the signal subspace with high probability. It is again
estimated as $1-1/{\cal N}^{\kappa}$ in Theorem~\ref{pnas}. 
Finally, the incoherence condition~(\ref{Mcond}) implies that it is too expensive to approximate the signal $\vect d_0$ with columns $T$ that are not in the support of $\vect \chi$
and, hence, there are no false positives. 

In Theorem~\ref{pnas} we used randomness twice: the noise vector  $\vect e$ was random and the columns of the noise collector were drawn at random. Note that in both  cases randomness could be 
replaced by deterministic conditions requiring that  $\vect e$ and the columns of $\cC$  are almost orthogonal to the  signal  subspace. It is natural to assume that the noise vector  $\vect e$ is a random variable and, as we explain in~\cite{Moscoso20b},
 the columns of $\cC$ are random 
because it is hard to construct a deterministic $\cC$ that satisfies the almost orthogonality conditions. In the present work
we still construct the matrix  $\cC$ randomly, but we sometimes treat the vector $\vect e$ as deterministic, as for example, in our Theorem~\ref{thm-ortho}. Inspection of the proofs in~\cite{Moscoso20b} shows that the only condition
on $\vect e$ we need to verify from Theorem~\ref{pnas} is~(\ref{no-random}). Thus, the next Theorem is a deterministic reformulation of Theorem~\ref{pnas}. The proof is given in  \ref{sec:proof2}.

\begin{theorem}\label{thm-ortho}
Assume conditions on $\vect \chi$, $T$, and $\Cc$ 
are as in Theorem~\ref{pnas} and  define $\gamma$ as in (\ref{def:gamma}).
Then, for any $\kappa > 0$, there are constants  $\tau_0=\tau_0(\kappa, \beta)$,  $c_0=c_0(\kappa, \beta)$,   and ${\cal N}_0= {\cal N}_0(\kappa, \beta, \gamma)$,
 $\alpha=\alpha(c_0, \kappa, \beta)$
such that the following two claims hold.

(i) If $\vect e$ satisfies~(\ref{no-random})
 for all $\vect t_i$, $i \not\in \mbox{ supp }(\vect \chi) $; all columns of $T$ satisfy
\begin{equation}\label{no-random-2}
 |\langle \vect t_i,\vect t_j \rangle | \leq c_0 \frac{\sqrt{\ln {\cal N}}}{\sqrt{\cal N}} 
\end{equation}
for all $i$ and $j$; the sparsity $M$ is such that 
\begin{equation}\label{eq:M}
  M \leq \alpha \frac{ \sqrt{\cal N}}{\sqrt{\ln {\cal N}}};
\end{equation}
  and  $\tau \geq  \tau_0$, 
then  $\mbox{supp}(\vect \chi_{\tau})  \subset \mbox{supp}(\vect \chi)$ 
  with  probability  $1-1/{\cal N}^{\kappa}$.
  
(ii) If, in addition,  the noise is not large, so
\begin{equation}\label{new-estimate1}
\left| \langle \vect t_m,  \vect e \rangle \right| \leq   \min_{i \in \mbox{supp}(\vect \chi)} |\chi_i|/2
 \end{equation}
  for all $\vect t_m$, $m \in \mbox{supp}(\vect \chi)$,
and 
\begin{equation}\label{new-estimate2}
  \| \vect e\|_{\ell_2} \leq  c_1  \| \vect \chi \|_{\ell_1}
\end{equation}
for some $c_1$, then
$\mbox{supp}(\vect \chi) = \mbox{supp}(\vect \chi_{\tau})$ for  all ${\cal N}>{\cal N}_0$ 
with  probability  $1-1/{\cal N}^{\kappa}$. 
\end{theorem}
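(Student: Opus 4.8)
The plan is to rerun the variational comparison underlying Theorem~\ref{pnas}, but to replace every probabilistic statement about the noise vector $\vect e$ by the deterministic hypotheses~(\ref{no-random}) and~(\ref{new-estimate1})--(\ref{new-estimate2}), retaining randomness only in the construction of $\Cc$; this is why the conclusions persist with probability $1-1/{\cal N}^{\kappa}$. First I would extract from~\cite{Moscoso20b} the two purely geometric properties of the random noise collector that drive the argument, each valid with probability $\geq 1-1/{\cal N}^{\kappa}$ once ${\cal N}\geq{\cal N}_0$. With the atomic cost written as $\|\vect v\|_{\Cc}=\min\{\|\vect \eta\|_{\ell_1}:\Cc\vect \eta=\vect v\}$, these are: (a) a universal upper bound $\|\vect v\|_{\Cc}\leq c_\beta\sqrt{{\cal N}/\ln{\cal N}}\,\|\vect v\|_{\ell_2}$, so $\Cc$ can absorb any vector at a controlled price; and (b) a lower bound on the signal subspace $V=\mbox{span}\{\vect t_j:j\in\supp(\vect \chi)\}$, namely $\|\vect v\|_{\Cc}\gtrsim\sqrt{{\cal N}/(M+\ln{\cal N})}\,\|\vect v\|_{\ell_2}$ for $\vect v\in V$, which holds because the columns of $\Cc$ are almost orthogonal to this low-dimensional subspace. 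Thus $\Cc$ cannot cheaply counterfeit the signal, while it can cheaply swallow anything orthogonal to it.

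For part (i) I would argue by exchange of objective value. Let $(\vect \chi_\tau,\vect \eta_\tau)$ be optimal in~(\ref{rho_tt}) and suppose, for contradiction, that $(\vect \chi_\tau)_m\neq0$ for some $m\notin\supp(\vect \chi)$. Splitting $\vect d=T\vect \chi+\vect e$ with $T\vect \chi\in V$, the residual that this inactive mass can usefully cancel has size only $O(\sqrt{\ln{\cal N}/{\cal N}})$ per unit of $\ell_1$-mass: its overlap with $\vect e$ is bounded by~(\ref{no-random}), and its overlap with the active part of $T\vect \chi$ is bounded by the incoherence~(\ref{no-random-2}) summed over the $M$ active columns. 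Any $\ell_1$-mass on index $m$ is charged the weight $\tau$, whereas by property~(a) the same residual is representable through $\Cc$ at price $\leq c_\beta\sqrt{{\cal N}/\ln{\cal N}}$ times its $\ell_2$-norm; once $\tau\geq\tau_0$ this makes the $\Cc$-route strictly cheaper, so transferring the inactive mass from $T$ to $\Cc$ lowers the objective and contradicts optimality. The sparsity budget~(\ref{eq:M}) enters precisely to keep the aggregate cross-talk of the $M$ active columns onto any single inactive direction below the $\tau$-threshold.

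For part (ii) I would invoke part (i) to restrict attention to $\supp(\vect \chi)$ and then show that each active coordinate is actually detected, i.e.\ $(\vect \chi_\tau)_i\neq0$ for $i\in\supp(\vect \chi)$. On the $M$-dimensional active block, incoherence~(\ref{no-random-2}) with $M$ small makes the active columns a near-orthonormal system, so the optimal $(\vect \chi_\tau)_i$ tracks $\chi_i$ up to a perturbation controlled by the projection of $\vect e$ onto $V$ and by what $\Cc$ legitimately siphons off. Condition~(\ref{new-estimate1}), $|\langle\vect t_m,\vect e\rangle|\leq\min_i|\chi_i|/2$, guarantees this perturbation is smaller than the signal on every active coordinate, so none is driven to zero, while~(\ref{new-estimate2}) caps the total energy $\Cc$ can claim from the signal. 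Together with part~(i) this yields $\supp(\vect \chi_\tau)=\supp(\vect \chi)$.

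The hard part is the quantitative exchange argument of part~(i): one must keep $\tau_0$ an $O(1)$ constant independent of the noise level while simultaneously ensuring that the leakage bound obtained from~(\ref{no-random}) and~(\ref{no-random-2})--(\ref{eq:M}) stays strictly under that threshold. The feature specific to this deterministic reformulation is the verification that~(\ref{no-random}) is the only property of $\vect e$ used in the no-false-positive half of~\cite{Moscoso20b}; once that is confirmed, the active-column estimate~(\ref{new-estimate1}) supplies for part~(ii) exactly the lower bound on the surviving signal that the random model previously produced for free, and the stated probability $1-1/{\cal N}^{\kappa}$ is inherited entirely from the geometric properties~(a)--(b) of $\Cc$.
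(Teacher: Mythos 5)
Your high-level picture (keep randomness only in $\Cc$, use Dvoretzky--Milman to quantify what the collector can absorb, and feed the deterministic hypotheses on $\vect e$ into the no-false-positive and no-false-negative halves separately) matches the paper's strategy, and your property (a) is indeed the geometric fact the paper invokes. But the quantitative core of your part (i) --- the exchange step --- is mis-accounted in a way that breaks for the sparsity range the theorem actually allows. You propose to reroute the ``useful'' part of an inactive column $\vect t_m$ entirely through $\Cc$ at price $c_\beta\sqrt{{\cal N}/\ln{\cal N}}$ per unit of $\ell_2$-norm. The overlap of $\vect t_m$ with the noise direction is $O(\sqrt{\ln{\cal N}/{\cal N}})$ by~(\ref{no-random}), so that piece is rerouted at $O(1)$ cost; fine. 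But the projection of $\vect t_m$ onto the span of the $M$ active columns has $\ell_2$-norm up to $\sim\sqrt{M}\,c_0\sqrt{\ln{\cal N}/{\cal N}}$, which under~(\ref{eq:M}) is of order $(\ln{\cal N}/{\cal N})^{1/4}$, so its $\Cc$-price is of order $({\cal N}/\ln{\cal N})^{1/4}\to\infty$ and is never beaten by an $O(1)$ weight $\tau$. The paper's proof avoids this by \emph{not} sending that component to the collector: writing $\vect t_m^{v}=\xi_0\vect t_0+\sum_k\xi_k\vect t_{i_k}$, it bounds each $|\xi_k|\leq\tfrac{4c_0}{3}\sqrt{\ln{\cal N}/{\cal N}}$ and hence $\sum_k|\xi_k|\leq 1/3$ (this is exactly where~(\ref{eq:M}) and the choice of $\alpha$ enter), so the active-subspace part is re-expressed through the active columns themselves at $\ell_1$-cost $1/3<1$, and only the single direction $\vect t_0=\vect e/\|\vect e\|$ goes into $H_1(\tau_0)$ via Milman--Dvoretzky. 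Your version also leaves open what happens to the component of $\vect t_m$ orthogonal to the relevant $(M+1)$-dimensional subspace: in a primal exchange that component must either be re-represented (prohibitively expensive through $\Cc$) or shown to be cancelled elsewhere in the original representation; the paper sidesteps this by working with the projected columns $\vect t_j^v$ and the strict-inclusion criterion $\vect t_j^v\subset H(\tau)$, which is a separation (dual-certificate) statement rather than a mass-transfer one.

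Your part (ii) asserts the conclusion rather than proving it. The mechanism in the paper is geometric: restricting to the two-dimensional planes $V^i=\mathrm{span}(\vect t_i,\vect e)$, the set $\lambda H(\tau)\cap V^i$ is a rounded rhombus whose curved cap (coming from $\Cc$) has radius $\lambda\tau c_0\sqrt{\ln{\cal N}/{\cal N}}$, and the active atom survives iff $\chi_i\vect t_i+\vect e$ meets the \emph{flat} part of the boundary, which reduces to the inequality~(\ref{triangle}) comparing $|\langle\vect t_i,\chi_i\vect t_i+\vect e\rangle|/\|\chi_i\vect t_i+\vect e\|_{\ell_2}$ with $\tau c_0\|\vect\chi\|_{\ell_1}\sqrt{\ln{\cal N}}/(\sqrt{{\cal N}}\,|\chi_i|)$. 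This is where~(\ref{new-estimate1}) (numerator bound), ~(\ref{new-estimate2}) (denominator bound), $\gamma$, and a second use of the sparsity condition $M\lesssim\sqrt{{\cal N}/\ln{\cal N}}$ all enter; it is not simply the statement that ``the perturbation is smaller than the signal on every active coordinate,'' because the competition is against the collector's ability to represent $\chi_i\vect t_i+\vect e$ cheaply, not against a coordinatewise noise level. As written, your sketch does not supply this comparison, so part (ii) has a genuine gap as well.
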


%

 In contrast to Theorem~\ref{pnas}, we require  in Theorem~\ref{thm-ortho} condition~(\ref{no-random}) to hold only for   
 for  $\vect t_i$, $i \not\in \mbox{ supp }(\vect \chi)$, that is for the columns of $T$ outside the support of $\vect \chi$.
 For the columns inside the support, $i \in \mbox{ supp }(\vect \chi)$, we relax condition~(\ref{no-random}) to condition~(\ref{new-estimate1}). Thus 
 Theorem~\ref{thm-ortho} has slightly weaker assumptions than Theorem~\ref{pnas}. For a random $\vect e$ this weakening in not essential, because one needs  to 
 know the support  of $\vect \chi$ in advance. It turns out that for our $\vect e$  this weakening will become important (see Remark~\ref{new_e} in the end of~\ref{sec:proof3}) .

\subsection{Dimension reduction for quadratic cross correlation data }
\label{subsec:dimReduce}

The $N^2\times K^2$  linear problem (\ref{eq:vec1}) that uses quadratic cross correlation data is notoriously hard to solve due to its high dimensionality. Therefore, 
 we propose the following strategy for robust dimensionality reduction.  The idea is to treat the contribution of the off-diagonal elements of $X=\bfrho\,\bfrho^*\in\cC^{K\times K}$ 
 as {\it noise} and, thus, use the {\it Noise Collector} to absorb it. Namely, we define
\begin{equation}
\label{eq:def0}
\vect \chi =\mbox{diag}(X) = [|\rho_1|^2, |\rho_2|^2, \dots, |\rho_K|^2]^T\, ,
\end{equation}
and re-write~(\ref{eq:vec1}) as
\begin{equation}
\label{eq:dimred}
 T \, {\mbox{\boldmath$\chi$}} + \cC \, \bfeta= \bfd\, ,
\end{equation}
where  {\color{black} we replace the off-diagonal elements by the Noise Collector term $\cC \, \bfeta$ and}
\begin{equation}
\label{eq:def1}
T=({\bar \Ac} \otimes \Ac)_{\vect \chi}
\end{equation} 
contains only the $K$ columns of  ${\bar \Ac} \otimes \Ac$ corresponding to ${\mbox{\boldmath$\chi$}}$.  Thus,  the size of  $\vect \chi$ is ${\cal K}$ and the size of $T$ is ${\cal N} \times {\cal K}$, 
with ${\cal K}=K$ and ${\cal N} =N^2$. In practice, the measurements may be subsampled as well, so the size of the system can be further reduced to ${\cal N} \times {\cal K}$, with ${\cal N} = O(N)$ and  ${\cal K}=K$.

Problem (\ref{eq:dimred})
can be understood as an exact linearization of the classical phase retrieval problem, where all the interference terms $\rho_i \rho^*_j$ for $i\ne j$ are absorbed 
in $\cC \, \bfeta$, with $\bfeta$ being an unwanted vector considered to be noise in this formulation. 
In other words, the phase retrieval problem with $K$ unknowns has been transformed to the linear problem (\ref{eq:dimred}) that also has  $K$ unknowns.  
Note, though, that  in phase retrieval only autocorrelation measurements are considered, while in (\ref{eq:dimred}) we also use cross-correlation measurements.

In the next theorem  we use all the  measurements  $\vect d \in \mathbb{C}^{\cal N}$,   so ${\cal N} =N^2$   in~(\ref{eq:dimred}). This is done for simplicity of presentation, but in practice  ${\cal N} =O(N)$ measurements  are enough.
We will choose a solution of~(\ref{eq:dimred})
 using~(\ref{rho_tt}). As in Theorems~\ref{pnas} and \ref{thm-ortho},  the vector $\bfeta$ in (\ref{eq:dimred}) has ${\cal N}^{\beta}$ entries that do not have physical meaning. Its only purpose is to absorb 
the off-diagonal contributions in $\vect e = \bfd - T \vect \chi $. We point out that the magnitude of $\vect e$ is not small if $M\ge 2$. Indeed, 
the contribution of ${\mbox{\boldmath$\chi$}}=\mbox{diag}(X)$ to the data $\bfd$  is of order $M$, while the contribution of the off-diagonal terms of $X$ is of order $M^2$. Furthermore,  the
vector  $\vect e$ is not independent of $\vect \chi$ anymore.

\begin{theorem}\label{d-reduce} 
Fix  $|\rho_i|$. Suppose  the phases $\rho_i/|\rho_i|$  are independent  and uniformly distributed on the (complex) unit circle.  
Suppose $X$ is a solution of~(\ref{eq:vec1}), $ \vect \chi =\mbox{diag}(X)$ is $M$-sparse, 
and
$T=({\bar \Ac} \otimes \Ac)_{\vect \chi}: \mathbb{C}^{\cal K} \to \mathbb{C}^{\cal N}$,   ${\cal K}=K$ and ${\cal N} = N^2$.
Fix $\beta>1$,  and draw  $\Sigma={\cal N}^{\beta}$  columns for 
$\Cc$, independently, from the  uniform distribution on 
$\mathbb{S}^{{\cal N}-1}$. 
 Denote 
 \begin{equation}
 \label{def:Delta}
 \Delta = \sqrt{N}  \max_{i \neq j} |\langle \vect a_i, \vect a_j \rangle|,
 \end{equation}
 and define $\gamma$ as in (\ref{def:gamma}). Then, for any $\kappa > 0$, there are constants $\alpha=\alpha(\kappa, \gamma, \Delta)$, $\tau=\tau(\kappa, \beta)$, and ${\cal N}_0= {\cal N}_0(\kappa, \beta, \gamma, \Delta)$ such that
the following holds. If  
$M \leq \alpha N/\sqrt{\ln N}$ 
and $ \vect \chi_{\tau}$  is the solution of~(\ref{rho_tt}),
 then
$ \mbox{supp}(\vect \chi) = \mbox{supp}(\vect \chi_{\tau})$ for  all ${\cal N}>{\cal N}_0$ 
with  probability  $1-1/{\cal N}^{\kappa}$. 
\end{theorem}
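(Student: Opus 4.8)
The plan is to reduce Theorem~\ref{d-reduce} to the deterministic reformulation in Theorem~\ref{thm-ortho}. Since the conclusion $\mbox{supp}(\vect \chi)=\mbox{supp}(\vect \chi_{\tau})$ follows from part (ii) of Theorem~\ref{thm-ortho}, it suffices to verify that, with probability $1-1/{\cal N}^{\kappa}$, all of its hypotheses hold for the specific $T$, $\cC$, and $\vect e = \bfd - T\vect \chi$ arising here. The columns of $\cC$ are drawn exactly as required, so the work is to check three things: the column-coherence bound~(\ref{no-random-2}) on $T$, the two noise conditions~(\ref{no-random}) and~(\ref{new-estimate1}) on $\vect e$, and the sparsity bound~(\ref{eq:M}). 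The only genuinely new feature compared with Theorem~\ref{thm-ortho} is that $\vect e$ is now a \emph{deterministic} function of $X$ (the off-diagonal part of the lifted signal) rather than an independent random vector, and its randomness enters only through the phases $\rho_i/|\rho_i|$.

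\medskip

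First I would translate the coherence of $T$ into the coherence of $\Ac$. The columns of $T=(\bar\Ac\otimes\Ac)_{\vect\chi}$ are the vectors $\bar{\vect a}_k\otimes\vect a_k$, and since the $\vect a_k$ have unit $\ell_2$-norm so do these Kronecker products. A short computation gives $\langle \bar{\vect a}_i\otimes\vect a_i,\, \bar{\vect a}_j\otimes\vect a_j\rangle = |\langle \vect a_i,\vect a_j\rangle|^2$. Hence the off-diagonal coherence of $T$ is the \emph{square} of that of $\Ac$, which by the definition~(\ref{def:Delta}) of $\Delta$ is at most $(\Delta/\sqrt N)^2 = \Delta^2/N$. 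With ${\cal N}=N^2$ this is $\Delta^2/\sqrt{\cal N}$, so condition~(\ref{no-random-2}), which asks for a bound of order $\sqrt{\ln{\cal N}}/\sqrt{\cal N}$, holds \emph{deterministically} once $\sqrt{\cal N}$ is large enough, with $c_0$ absorbing $\Delta^2$. This squaring is precisely what buys the improvement in the achievable sparsity. Indeed, the sparsity hypothesis~(\ref{eq:M}) of Theorem~\ref{thm-ortho} reads $M\le \alpha\sqrt{\cal N}/\sqrt{\ln{\cal N}}=\alpha N/\sqrt{\ln N}$ after substituting ${\cal N}=N^2$, which is exactly the bound claimed in Theorem~\ref{d-reduce}, so no separate argument is needed here beyond the substitution.

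\medskip

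The heart of the proof is controlling the off-diagonal noise vector $\vect e$. Writing $\vect d=(\bar\Ac\otimes\Ac)\,\mbox{vec}(X)$ and subtracting the diagonal contribution $T\vect\chi$, one gets $\vect e=\sum_{i\neq j}\rho_i\bar\rho_j\,(\bar{\vect a}_j\otimes\vect a_i)$. I would compute the inner products $\langle \vect t_m,\vect e\rangle=\langle \bar{\vect a}_m\otimes\vect a_m,\vect e\rangle = \sum_{i\neq j}\rho_i\bar\rho_j\,\overline{\langle \vect a_m,\vect a_j\rangle}\,\langle\vect a_m,\vect a_i\rangle$ for each fixed $m$. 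With the $|\rho_i|$ fixed and the phases $\rho_i/|\rho_i|$ i.i.d.\ uniform on the unit circle, this is a sum of mean-zero independent complex random variables, so I would apply a concentration inequality (Hoeffding/Bernstein for sub-Gaussian sums, or a complex Azuma bound) to show that with probability $1-1/{\cal N}^{\kappa}$ the magnitude $|\langle\vect t_m,\vect e\rangle|$ is small. For $m\notin\mbox{supp}(\vect\chi)$ I must produce the orthogonality bound~(\ref{no-random}), of order $\sqrt{\ln{\cal N}/{\cal N}}\,\|\vect e\|_{\ell_2}$; the coherence factors $\langle\vect a_m,\vect a_i\rangle$ supply extra smallness ($O(\Delta^2/N)$ each) so the square-root-of-variance estimate comfortably meets this threshold. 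For $m\in\mbox{supp}(\vect\chi)$, which is where the relaxation in Theorem~\ref{thm-ortho}(ii) matters, I need only the weaker bound~(\ref{new-estimate1}), namely $|\langle\vect t_m,\vect e\rangle|\le \min_i|\chi_i|/2$; this is exactly why the weakened hypothesis was introduced, since for $m$ in the support the diagonal term of $\vect e$ cannot be made arbitrarily small and the full orthogonality~(\ref{no-random}) would fail. I would also bound $\|\vect e\|_{\ell_2}$ to verify~(\ref{new-estimate2}), again using that the energy of the off-diagonal sum, controlled through the coherence $\Delta$ and the phase averaging, stays proportional to $\|\vect\chi\|_{\ell_1}$. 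Finally, a union bound over the $O({\cal N})$ indices $m$ preserves the overall failure probability $1/{\cal N}^{\kappa}$ after adjusting $\kappa$.

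\medskip

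The main obstacle, and the step I expect to require the most care, is the noise estimate for $m$ in the support, together with the loss of independence between $\vect e$ and $\vect\chi$. Unlike Theorem~\ref{pnas}, where $\vect e$ is an exogenous random vector, here $\vect e$ is built from the \emph{same} phases that determine $\vect\chi$, so the probabilistic analysis of $\langle\vect t_m,\vect e\rangle$ must be conditioned correctly on the fixed moduli $|\rho_i|$ while treating the phases as the only source of randomness; this is the content flagged in Remark~\ref{new_e}. The diagonal terms $i=j=m$ are absent from $\vect e$ (they sit in $T\vect\chi$), but cross terms involving $m$ survive, and showing that their phase-randomized sum is bounded by $\min_i|\chi_i|/2$ rather than merely by something small relative to $\|\vect e\|_{\ell_2}$ is the delicate point. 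Once these estimates are in hand, every hypothesis of Theorem~\ref{thm-ortho}(ii) is verified on an event of probability $1-1/{\cal N}^{\kappa}$, and the conclusion $\mbox{supp}(\vect\chi)=\mbox{supp}(\vect\chi_{\tau})$ follows immediately.
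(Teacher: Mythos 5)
Your overall architecture matches the paper's: reduce to Theorem~\ref{thm-ortho}, observe that the relevant columns of $T$ satisfy $|\langle \vect t_i,\vect t_j\rangle| = |\langle\vect a_i,\vect a_j\rangle|^2\le\Delta^2/N=\Delta^2/\sqrt{{\cal N}}$ so that~(\ref{no-random-2}) and the improved sparsity scaling come essentially for free from the substitution ${\cal N}=N^2$, and then verify the noise conditions~(\ref{no-random}), (\ref{new-estimate1}), (\ref{new-estimate2}) for the error $\vect e=\sum_{k\ne l}\bar\rho_k\rho_l(\bar{\Ac}\otimes\Ac)_{k,l}$ using only the randomness of the phases. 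The gap is in the concentration step. The quantities you must control, e.g.\ $\langle\vect t_m,\vect e\rangle=\sum_{k\ne l}\bar\rho_k\rho_l\langle\vect a_m,\vect a_k\rangle\langle\vect a_m,\vect a_l\rangle$ and the cross terms in $\|\vect e\|^2_{\ell_2}$, are \emph{bilinear} forms in the independent phases, not sums of independent random variables: summands $\bar\rho_k\rho_l$ for different pairs $(k,l)$ share factors, so the Hoeffding/Bernstein/Azuma bounds you invoke for "mean-zero independent" summands do not apply. The correct tool is a Hanson--Wright inequality for quadratic chaos, and even that is not enough off the shelf: the standard tail $\exp(-c\min(t^2/\|{\bf M}\|_F^2,\,t/\|{\bf M}\|))$ is dominated by the operator-norm term in this regime, which would cost a factor of $\sqrt{\ln N}$ in the sparsity. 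The paper therefore proves a strengthened, purely sub-Gaussian version for bounded symmetric variables (Theorem~\ref{hw-new}); this is a genuine ingredient your outline omits and cannot be replaced by the inequalities you name.

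Two smaller points. For $m\in\mathrm{supp}(\vect\chi)$ the paper splits $\langle\vect t_m,\vect e\rangle=\Theta^1_m+\Theta^2_m$, where $\Theta^1_m$ collects the terms with $k,l\ne m$ (Hanson--Wright with the strong $\Delta^2/N$ coefficients) and $\Theta^2_m=\sum_{k\ne m}(\bar\rho_m\rho_k+\bar\rho_k\rho_m)\langle\vect a_m,\vect a_k\rangle$ collects the terms with one index equal to $m$; only the latter, after conditioning on $\rho_m$, is a genuine sum of independent variables where Hoeffding applies, and only there do the weaker $\Delta/\sqrt N$ coefficients force the relaxed condition~(\ref{new-estimate1}). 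You correctly flag this as the delicate point but do not supply the decomposition that makes it work. Also, besides the upper bound~(\ref{new-estimate2}) you need a matching \emph{lower} bound $\|\vect e\|_{\ell_2}\gtrsim M$ to convert the concentration threshold into the form required by~(\ref{no-random}); the paper obtains both at once from $\|\vect e\|^2_{\ell_2}=\|\vect\chi\|^2_{\ell_1}+2\|\vect\chi\|_{\ell_1}\Xi_1+\Xi_2$ together with a deterministic estimate of the degenerate-index part of $\Xi_2$.
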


{\color{black} The proof of Theorem~\ref{d-reduce} is given in \ref{sec:proof3}. 
In Theorem~\ref{d-reduce}  the scaling for sparse recovery is $M \leq \alpha N/\sqrt{\ln N}$. This result is in good agreement with our numerical experiments, see Figure~\ref{fig_phase}. In order   
to obtain this scaling we introduced our probabilistic framework - in Theorem~\ref{d-reduce}  assuming that the phases of the signals  are random. The idea is that a vector with random phases better  describes a typical  signal in many applications.
The dimension reduction, however, could be done without introducing the  probabilistic framework. We state 
and prove a deterministic version of Theorem~\ref{d-reduce} in~\ref{sec:proof4}  for completeness.
  In this case the scaling for sparse recovery is more conservative:   $M \leq \alpha \sqrt{N}/\sqrt{\ln N}$, and it does not agree with our numerical experiments. 
}

\section{Algorithmic implementation}
\label{sec:algo}

A key point of the propose strategy is that the $M$-sparsest solution of (\ref{eq:dimred}) can be effectively found  by solving the minimization problem
\begin{eqnarray}\label{rho_t}
\left( \bfchi_{\tau}, \vect \eta_{\tau} \right) = 
\arg\min_{ \small \vect {\mbox{\boldmath$\chi$}} , \small \vect \eta} \left(    \tau  \| \vect {\mbox{\boldmath$\chi$}}  \|_{\ell_1} +  \| \vect \eta \|_{\ell_1}  \right),\\
 \hbox{ subject to }T \, {\mbox{\boldmath$\chi$}} + \Cc \vect \eta =\bfd  , \nonumber 
\end{eqnarray}
with an $O(1)$ no-phatom weight $\tau$.  
The main property of this approach is that if the matrix $T$ is incoherent enough, so its columns satisfy assumption (\ref{Mcond}) of Theorem \ref{pnas},  
the $\ell_1$-norm minimal solution of (\ref{rho_t}) has a zero false discovery rate for any level of noise, with probability that tends to one as the dimension of the data $\Nc$ increases to infinity. More specifically, the relative level of noise that the {\em Noise Collector} can handle is of order $O(\sqrt{\Nc}/\sqrt{M \ln \Nc})$.  Below this level of noise there are no false discoveries. 


To find the minimizer in  (\ref{rho_t}), we  define the function
\begin{eqnarray} 
 F(\bfchi, \vect \eta, \bz) &=& \lambda\,(\tau \| \bfchi \|_{\ell_1} +  \| \vect \eta \|_{\ell_1}) \\
&+& \frac{1}{2} \| T  \vect \chi  + \Cc  \vect \eta - \bfd \|^2_{\ell_2} + \langle \bz, \bfd - T \bfchi - \Cc  \vect \eta \rangle \nonumber
\end{eqnarray}
for a no-phantom weight $\tau$, and determine the solution as
\vspace{-0.2cm}
\begin{equation}\label{min-max}
\max_{\bz} \min_{\bfchi,\vect \eta} F(\bfchi,\vect \eta,\bz) .
\end{equation}
This strategy finds the minimum in (\ref{rho_t}) exactly for all values of the regularization parameter $\lambda$. Thus,
the  method is fully automated, meaning that it 
has no tuning parameters. To determine the exact extremum in (\ref{min-max}), we use the iterative soft thresholding algorithm GeLMA~\cite{Moscoso12}
 that works as follows.
 
Pick a value for the no-phantom weight $\tau$; for optimal results calibrate $\tau$ to be the smallest value for which $\vect \chi=0$ when the algorithm is fed with pure noise.  
In our numerical experiments we use $\tau= 2$.
Next, pick a value for the regularization parameter, for example $\lambda=1$, and choose step sizes $\Delta t_1< 2/\|[T \, | \, \Cc]\|^2$ and 
$\Delta t_2< \lambda/\|T\|$\footnote{Choosing two step sizes instead of the smaller one $\Delta t_1$ improves the convergence speed.}. Set $\vect \bfchi_0= \vect 0$, $\vect \eta_0=\vect 0$, $\vect z_0=\vect 0$, and
iterate for $k\geq 0$:
\begin{eqnarray} 
&& \vect r = \bfd - T \,\bfchi_k - \Cc \,\vect\eta_k\nonumber \, ,\\
&&\vect \bfchi_{k+1}=\mathcal{S}_{ \, \tau \, \lambda \Delta t_1} ( \bfchi_k +\Delta t_1 \, T^*(\vect z_k+ \vect r))
\nonumber \, ,\\
&&\vect \eta_{k+1}=\mathcal{S}_{\lambda \Delta t_1} ( \vect\eta_k +\Delta t_1  \, \Cc^*(\vect z_k+ \vect r))
\nonumber \, ,\\
&&\vect z_{k+1} = \vect z_k + \Delta t_2 \, \vect r \label{eq:algo}\, ,
\end{eqnarray}
where  $\mathcal{S}_{r}(y_i)=\mbox{sign}(y_i)\max\{0,|y_i|-r\}$. 
\subsection{The Noise Collector: construction and properties} 
To construct the {\em Noise Collector} matrix $\cC \in \mC^{\Nc \times \Nc^\beta}$ that satisfies the assumptions of Theorem  \ref{pnas} one could draw $\Nc^\beta$ normally distributed $\Nc$-dimensional vectors, normalized to unit length. Thus, the additional computational cost incurred for implementing the {\em Noise Collector}  in (\ref{eq:algo}), due to the terms $\Cc \vect  \eta_k$ and $\Cc^* (\vect z_k + \vect r)$, would be $O(\Nc^{\beta+1})$, which is not very large as we use $\beta \approx 1.5$  in practice. The computational cost of (\ref{eq:algo})  without the {\em Noise Collector} mainly comes from the matrix vector multiplications $T \,\bfchi_k$ which can be done
in $O(\Nc{\cal K})$ operations and, typically, ${\cal K} \gg \Nc$.

To further reduce the additional computational time and memory requirements we use a different construction procedure that exploits the properties of circulant matrices. The idea is to  
draw instead a few normally distributed $\Nc$-dimensional vectors of length one, and construct from each one of them a circulant matrix of dimension $\Nc \times \Nc$. The columns of these matrices are still independent and uniformly distributed on $\mathbb{S}^{\Nc-1}$, so  they satisfy the assumptions of Theorem \ref{pnas}. The full {\em Noise Collector} matrix is then formed by concatenating these circulant matrices together. 

More precisely, the {\em Noise Collector} construction is done in the following way. We draw $\Nc^{\beta-1}$ normally distributed $\Nc$-dimensional vectors, normalized to unit length. 
These are the generating vectors of the {\em Noise Collector}. To these vectors are associated $\Nc^{\beta-1}$ circulant matrices $\cC_i \in \mC^{\Nc \times \Nc}$, $i=1, \ldots, \Nc^{\beta-1}$, and the {\em Noise Collector} matrix is constructed by concatenation of these $\Nc^{\beta-1}$ matrices, 
so 
$$
\Cc = \left[ \Cc_1 \left| \Cc_2 \left|  \cC_3 \left| \ldots  \right. \right. \right. \left| \cC_{\Nc^{\beta-1}} \right. \right] \in \mC^{\Nc \times \Nc^\beta}.
$$ 
We point out that the {\em Noise Collector} matrix $\Cc$ is not stored, only the $\Nc^{\beta-1}$ generating vectors are saved in memory. On the other hand, the matrix vector multiplications $\Cc \vect  \eta_k$ and $\Cc^* (\vect z_k + \vect r)$ in (\ref{eq:algo}) can be computed using these generating vectors and FFTs \cite{Gray06}. This makes the complexity associated to the {\em Noise Collector}  $O(\Nc^{\beta} \log(\Nc))$. 


To explain this further, we recall briefly below how a matrix vector multiplication can be performed using the FFT for a circulant matrix. 
For a generating vector $\vect c= [c_0, c_1,\ldots,c_{\Nc-1}]$, the $\cC_i$ circulant matrix takes the form
$$\cC_i = \left[ 
\begin{array}{llll}
c_0 & c_{\Nc-1} & \ldots & c_1 \\
c_1 & c_{0} & \ldots & c_2 \\
\vdots & & \ddots & \vdots \\
c_{\Nc -1}  & c_{\Nc -2} & \ldots & c_0 \\
\end{array}
\right] \, .
$$
This matrix can be diagonalized by the Discrete Fourier Transform (DFT) matrix, i.e., 
 $$ \cC_i =  \Fc \Lambda \Fc^{-1} $$ 
 where $\Fc$ is the DFT matrix, $\Fc^{-1}$ is its inverse, and $\Lambda$ is a diagonal matrix such that $\Lambda = \mbox{diag}(\Fc \vect c)$, where $\vect c$ is the generating vector. 
 Thus, a matrix vector multiplication $\cC_i \vect \eta$ is performed as follows: (i) compute $\vect {\hat \eta} = \Fc^{-1} \vect \eta$, the inverse DFT of $\vect \eta$ in $\Nc \log(\Nc)$ operations, (ii) compute the eigenvalues of $\cC_i$ as the DFT of $\vect c$, and component wise multiply the result with $\vect {\hat \eta}$ (this step can also be done in $\Nc \log(\Nc)$ operations), and  (iii)  compute the FFT of the vector resulting from step (ii) in, again,  $\Nc \log(\Nc)$ operations. 
 
 Consequently, the cost of performing the multiplication $\Cc \vect  \eta_k$ is $\Nc^{\beta-1} \Nc \log(\Nc)= \Nc^{\beta}  \log(\Nc)$. As the cost of finding the solution without the {\em Noise Collector} is $O(\Nc {\cal K})$ due to the terms $T \,\bfchi_k$, the additional cost due to the {\em Noise Collector} is negligible since ${\cal K} \gg \Nc^{\beta-1}\log(\Nc)$ because, typically, ${\cal K} \gg \Nc$ and $\beta \approx 1.5$.  

\section{Numerical results}
\label{sec:numerics}
We consider processing of passive array signals. We seek to determine the positions $\vz_{j}$ and the 
complex amplitudes $\alpha_j$ of $M$ point sources, $j=1,\dots,M$, from measurements of polychromatic signals on an array of receivers; see Figure \ref{fig:setup}. 
The source imaging problem is considered here for simplicity. The active array imaging problem can be cast under the same linear algebra framework even when multiple scattering is important \cite{CMP14}.

The array consists of $N_r=21$ receivers located at  $x_r=-\frac{a}{2}+\frac{r-1}{N_r-1} a$, $r=1,\ldots,N_r$, where $a=100\lambda$ is the array aperture. The imaging window (IW) is at range $L=100 \lambda$ from the array and the bandwidth $B=f_0/3$ of the emitted pulse is $1/3$ of the central frequency $f_0$, so the resolution in range is $c/B=3 \lambda$ while in cross-range it is $\lambda L/a= \lambda$.  We consider a high frequency microwave imaging regime with central frequency $f_0=60$GHz corresponding to $\lambda_0=5$mm. We make measurements for $N_f=21$ equally spaced frequencies spanning a bandwidth $B=20$GHz. The array aperture is $a=50$cm, and the distance from the array to the center of the IW is $L=50$cm. Then, the resolution 
is $\lambda_0 L/a=5$mm in the cross-range (direction parallel to the array) and $c_0/B=15$mm in range (direction of propagation). These parameters are typical in microwave scanning technology \cite{Laviada15}.

We consider an IW with $K=1681$ pixels which makes the dimension of $X=\bfrho\,\bfrho^*$ equal to $K^2= 2825761$. The pixel dimensions, i.e.,  the resolution of the imaging system, is $5 {\rm mm} \times  15 {\rm mm}$. The  total number of measurements is $N=N_r N_f=441$. Thus, we can form $N^2=194481$ cross-correlations over frequencies and locations.

Let us first note that with these values for $N$ and $K$, which in fact are not big, we cannot form the full matrix $({\bar \Ac} \otimes \Ac)$ so as to solve  (\ref{eq:vec1}) for $\mbox{vec}( X )$ because of its huge dimensions. For this reason, we propose to reduce the dimensionality of the problem to $K$ unknowns. Thus, we recover $\mbox{diag}(X)$ only, and neglect all the off-diagonal terms of $X$ corresponding to the interference terms $\rho_k \rho^*_{k'}$ for $k\ne k'$. We treat their contributions to the cross-correlated data as {\em noise}, which is absorbed in a fictitious vector $\bfeta$ using a {\em Noise Collector}. We stress that this {\em noise} is never small if $M\ge 2$, as its contribution to the the cross-correlated data is of order $O(M^2)$, while the contribution of $\mbox{diag}(X)$ is only of order $O(M)$.
\begin{figure}
\begin{center}
\includegraphics[scale=1]{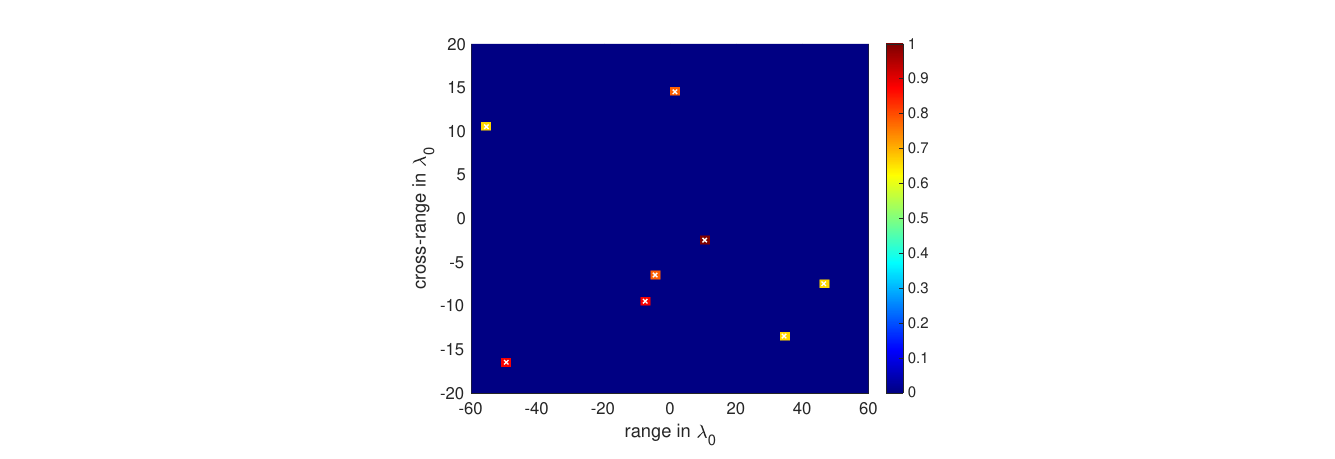}
 \end{center}
 \caption{The true ${\mbox{\boldmath$\chi$}}=\mbox{diag}(X)=\mbox{diag}(\bfrho\,\bfrho^*)$, i.e., the absolute values squared of the point sources amplitudes.. The dimension of the image is $K=1681$.} 
 \label{fig0} 
 \end{figure}

In the following examples, we consider imaging of $M=8$ point sources; see Fig. \ref{fig0}. 
Instead of the $N^2$  cross-correlated data which are, in principle, available, we only use $\mathcal{N}=21 N$ cross-correlated data picked at random. This reduces even more the dimensionality of the problem we solve.
In Fig. \ref{fig1}, we present the results when the used data is noise-free. The left column shows the results when we use the $\ell_1$ algorithm (\ref{eq:algo});
the top plot is the recovered image and the bottom plot the recovered ${\mbox{\boldmath$\chi$}}=\mbox{diag}(X)=\mbox{diag}(\bfrho\,\bfrho^*)$ vector.
The support of the sources is exact but the amplitudes are not. If it is important for an application to recover the amplitudes with  precision, one can consider in a second step the full problem (\ref{eq:vec1}) for $\mbox{vec}(X)$ with all the interference terms $\rho_k \rho^*_{k'}$ for $k\ne k'$,  but restricted to the exact  support found in the first step. If there is no noise in the data, this second step finds the exact values of the amplitudes efficiently using  an $\ell_2$ minimization method; see the right column of Fig. \ref{fig1}.

\begin{figure}
\begin{center}
\includegraphics[scale=1]{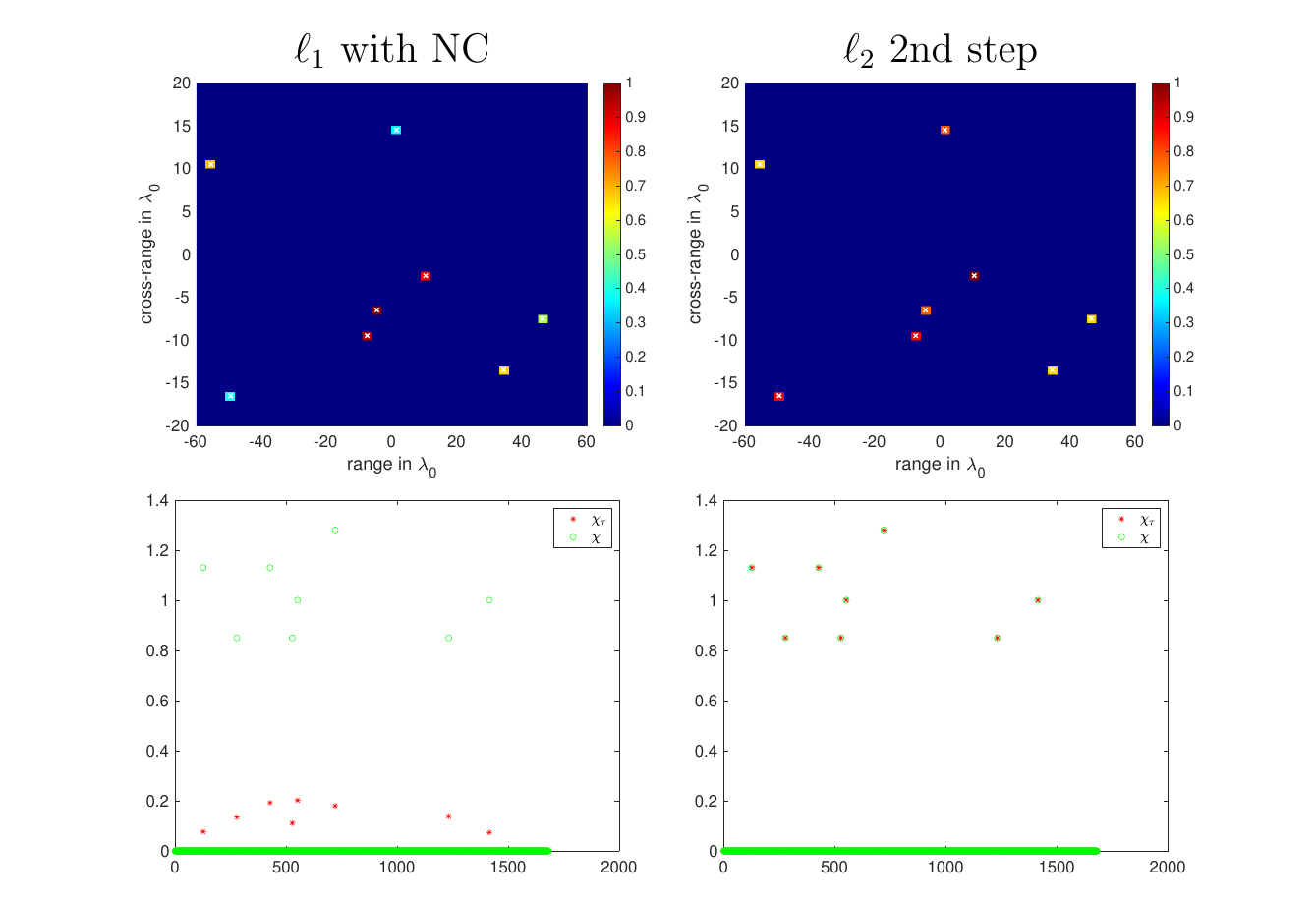}
    \end{center}
 \caption{Imaging $M=8$ sources using correlations and the NC. The dimension of the image is $K=1681$. The dimension of the linear data is $N=441$.  The $\ell_1$ images are obtained using  $21N$ of the $N^2$ correlation data. Noise free data.} 
 \label{fig1} 
 \end{figure}

In Figs. \ref{fig2} and \ref{fig3} we consider the same configuration of sources but we add white Gaussian noise to the data. The resulting SNR values are $10$dB and $0$dB, respectively.  In both cases, the solutions obtained in the first step look very similar to the one obtained in Fig. \ref{fig1} for noise free data. This is so, because the noise in the data is dominated by the neglected interference terms. The actual effect of the additive noise is only seen in the 2nd step when we solve for $\mbox{vec}(X)$, restricted to the support, using  an $\ell_2$ minimization method. Indeed, when the data are noisy  we cannot recover the exact values of the amplitudes. 
Still, since an $\ell_2$ method is used on the correct support, the reconstructions are extremely robust and give very good results, even when the SNR is $0$dB.

\begin{figure}
\begin{center}
\includegraphics[scale=1]{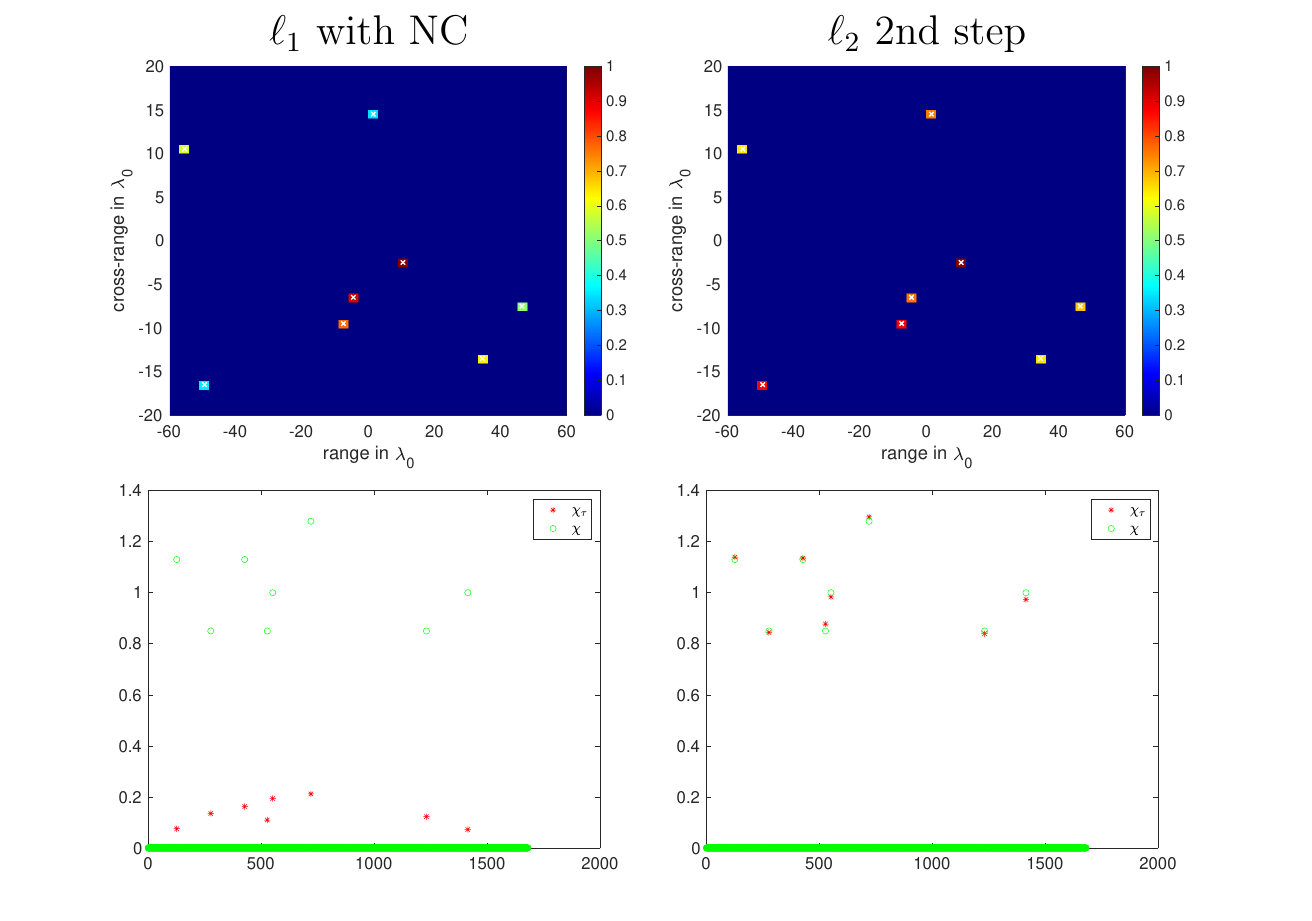}
    \end{center}
 \caption{Imaging $M=8$ sources using correlations and the NC. The dimension of the image is $K=1681$. The dimension of the linear data is $N=441$.  The $\ell_1$ images are obtained using  $21N$ of the $N^2$ correlation data. Data with 10dB SNR.} 
 \label{fig2} 
 \end{figure}
 
 
  \begin{figure}
\begin{center}
\includegraphics[scale=1]{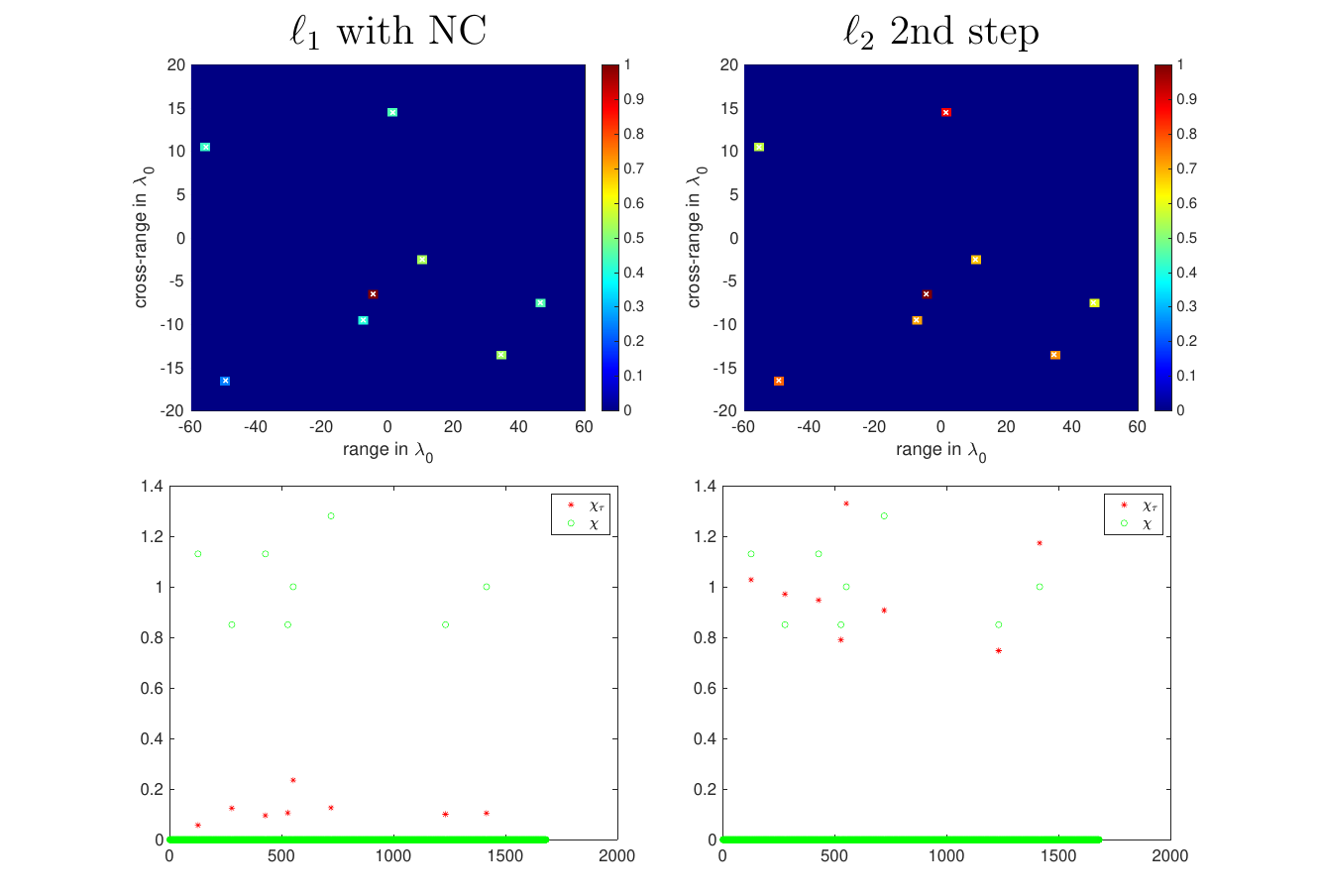}
    \end{center}
 \caption{Imaging $M=8$ sources using correlations and the NC. The dimension of the image is $K=1681$. The dimension of the linear data is $N=441$.  The $\ell_1$ images are obtained using  $21N$ of the $N^2$ correlation data. Data with 0dB SNR.} 
 \label{fig3} 
 \end{figure}

To illustrate the robustness of the reconstructions of the entire matrix $X=\bfrho\,\bfrho^*$ we also plot in Fig. \ref{fig4} the angle of $X_\tau$ compared to the angle of $X$ restricted on the support recovered during the first step. We get an exact reconstruction for noise-free data. The error in the reconstruction increases as the SNR decreases but the results are very satisfactory even for the $0$dB SNR case. 

  \begin{figure}
\begin{center}
\includegraphics[scale=1]{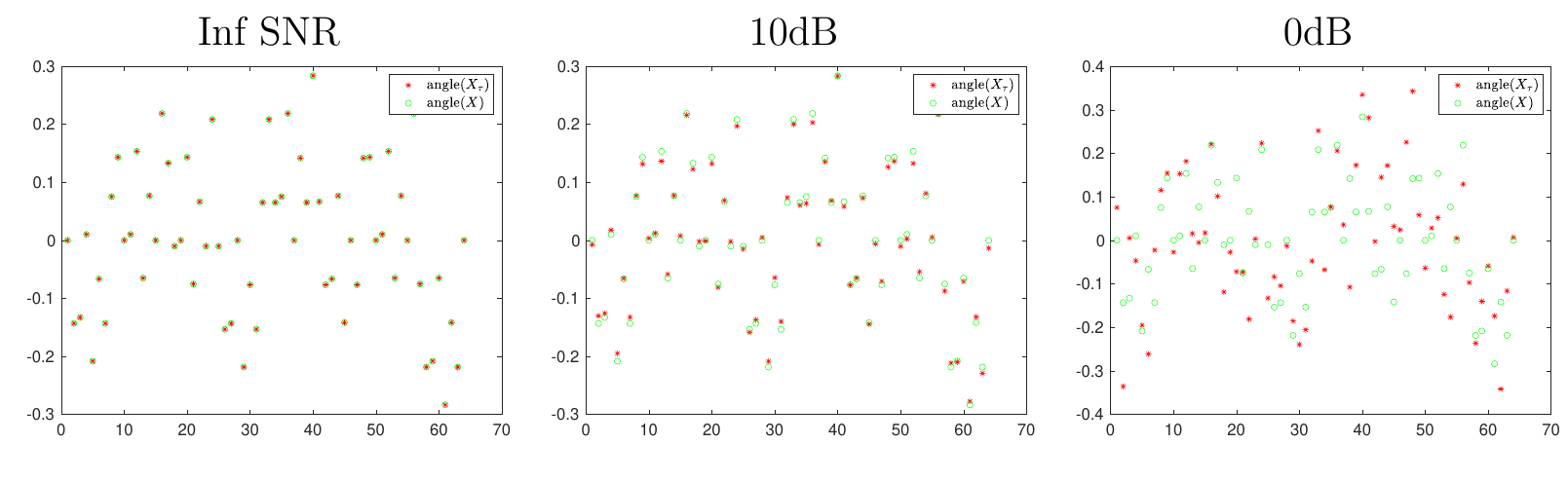}
    \end{center}
 \caption{Imaging $M=8$ sources using correlations and the NC. The dimension of the image is $K=1681$. The dimension of the linear data is $N=441$. The angle of the components of $X_\tau$ compared to angle of the components of the true $X$ restricted on the support.} 
 \label{fig4} 
 \end{figure}

Again, the big advantage of the proposed $\ell_1$ minimization  approach that seeks only for the components of $\mbox{diag}(X)$, and uses a {\em Noise Collector} to absorb the interference terms that are treated as noise, 
is that it is linear in the number of pixels $K$ instead of quadratic. This allows us to consider large scale problems. Moreover, as we observed in the results of Figs. \ref{fig1} to \ref{fig4}, the number of data $\mathcal{N}$ used to recover the images 
do not need to be $N^2$, but only a multiple of $N$.  

In Fig. \ref{fig_phase} we illustrate the performance of the proposed $\ell_1$ approach for different sparsity levels $M$ and data sizes $\mathcal{N}$. There is no additive noise added to the data in this figure. Success in recovering the true support of the unknown $\vect \chi$ corresponds to the value $1$ (yellow) and failure to $0$ (blue). The small phase transition zone (green) contains  intermediate values. The red line is the the estimate $\sqrt{\mathcal{N}}/(2\sqrt{\ln \mathcal{N}})$. These results are obtained by averaging over 10 realizations.
 
\begin{figure}[htbp]
\begin{center}
\includegraphics[scale=1]{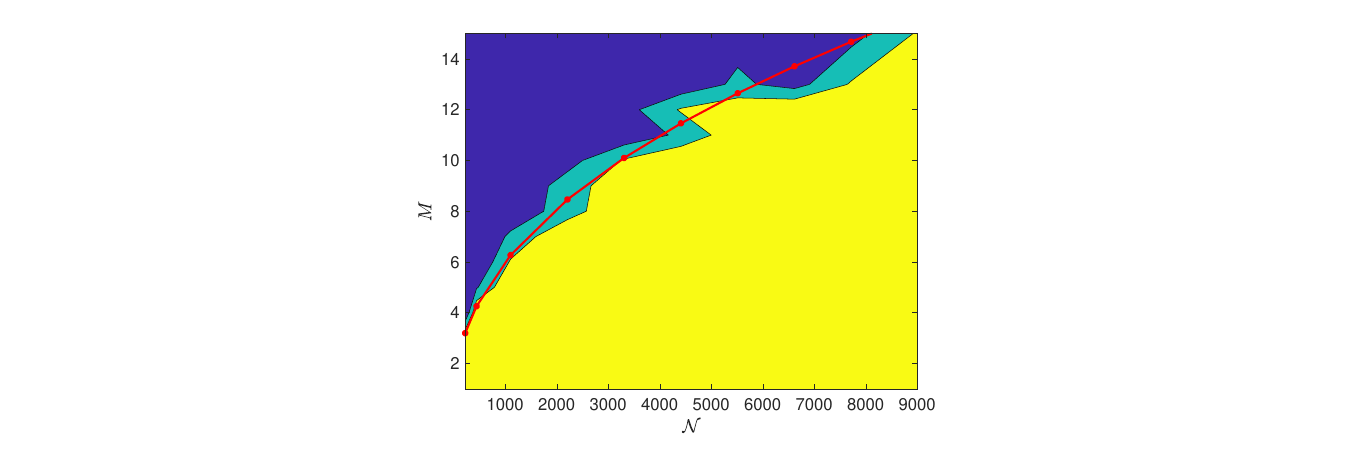}
\end{center}
\vspace*{-0.2cm}
\caption{Algorithm performance for exact support recovery during the first step using $\ell_1$ and the Noise Collector. Success corresponds to the value $1$ (yellow) and failure to $0$ (blue). The small phase transition zone (green) contains  intermediate values. The red line is the estimate $\sqrt{\mathcal{N}}/(2\sqrt{\ln \mathcal{N}})$. Ordinate and abscissa are the data used $\mathcal{N}$ and the sparsity $M$. }
\label{fig_phase}
\end{figure}

\section{Conclussions}
\label{sec:conclusions}

In this paper, we consider the problem of sparse signal recovery from cross correlation measurements. 
The unknown in this case is the correlated matrix signal $X=\vect \rho \vect \rho^*$ whose dimension grows quadratically with the size $K$ 
of $\vect \rho$ and, hence, inversion becomes computationally unfeasible as $K$ increases.
To overcome this issue, we propose a novel dimension reduction approach.  
Specifically, we vectorize the problem and consider as unknown only the diagonal terms $|\rho_i|^2$ of $X$ whose dimension is $K$ and are related to the data through a linear transformation. 
The off-diagonal interfernce terms $\rho_i \rho^*_j$ for $i\ne j$ are treated as noise and are absorbed using the {\em Noise Collector} approach introduced in \cite{Moscoso20b}.
This allows us to recover the signal exactly using efficient $\ell_1$-minimization algorithms. The cost of solving this dimension reduced problem is similar to the one using linear data. 
Furthermore, our numerical experiments show that the suggested approach is robust with respect to additive noise in the data. Finally, we point out that when using cross correlated 
data the maximum level of sparsity that can be recovered increases to $O(N/\sqrt{\ln N})$ instead of $O(\sqrt{N}/\sqrt{\ln N})$ for the linear data.


\section*{Acknowledgments}
The work of M. Moscoso was partially supported by Spanish MICINN grant FIS2016-77892-R. The work of A.Novikov was partially supported by NSF DMS-1813943 and AFOSR FA9550-20-1-0026. The work of G. Papanicolaou was partially supported by AFOSR FA9550-18-1-0519. The work of C. Tsogka was partially supported by AFOSR FA9550-17-1-0238 and FA9550-18-1-0519.



\appendix

\section{Proofs of the Theorems}
\label{sec:proofs}

\subsection{Proof of Theorem \ref{thm-ortho}}
\label{sec:proof2}

\begin{proof} To prove the first claim, we repeat the proof  of Theorem~2 from~\cite{Moscoso20b}.  Define $H_1$ as the convex hull of the columns of $\Cc$, and $H_2$ as the convex hull of the columns $\vect t_i$ of $T$ in the support of $\vect \chi$, as follows.
  \[
H_1 =H_1(\tau) =\left\{ x \in \mathbb{R}^{N}  \left| x=  \tau \sum_{i=1}^{\Sigma} \xi_i \vect c_i ,~\sum_{i=1}^{\Sigma} |\xi_i| \leq  1  \right. \right\}, 
\]
\[
H_2=\left\{ x \in \mathbb{R}^{N}  \left| x=   \sum_{ i \in \mbox{supp}(\vect \chi)} \xi_i \vect t_i ,~\sum_{i=1}^{K} |\xi_i| \leq  1  \right. \right\},  
\]
and
\[
H (\tau) =  \left\{ \xi h_1 + (1- \xi) h_2, 0 \leq \xi \leq 1, h_i \in H_i \right\}.
\]
Suppose the $(M+1)$-dimensional space $V$ is spanned by $\vect e$ and the column vectors $\vect t_j$, with $j$ in the support of $\vect \chi$. 
Denote by $\vect t_i^{v}$ the orthogonal projections of
$\vect t_i$ on $V$.
We will prove that $\mbox{supp}(\vect \chi_{\tau})  \subset \mbox{supp}(\vect \chi)$ if for any $\vect t_j$,  $j \not\in \mbox{supp}(\vect \chi)$, we have 
$\vect t^{v}_j \subset H(\tau)$ strictly (i.e. $\vect t^{v}_j  \cap \partial H(\tau)=\emptyset$)  
Fix $j \not\in \mbox{supp}(\vect \chi)$, and suppose

\begin{equation}\label{oye}
\vect t^{v}_j =  \xi_0 \vect t_0 + \sum_{k=1}^{M} \xi_k \vect t_{i_k}, \mbox{ where all }  i_k  \in  \mbox{supp}(\vect \chi), \vect t_0 =\frac{\vect e}{\| \vect e \|}.
\end{equation}
Suppose $|\xi_k|= \max_{n \leq M}  |\xi_n|$.  Multiply~(\ref{oye}) by $ |\xi_k|  \vect t^{v}_k/\xi_k$. Using~(\ref{no-random}) and~(\ref{no-random-2}) 
we obtain 
\[
c_0\frac{\sqrt{\ln {\cal N}}}{\sqrt{\cal N}} \geq |\xi_k| \left(1 - M  c_0 \frac{\sqrt{\ln {\cal N}}}{\sqrt{\cal N}} \right)
\]
Choose $\alpha$ in (\ref{eq:M}) so that 
\begin{equation}\label{alphachoice}
 M  c_0\frac{\sqrt{\ln {\cal N}}}{\sqrt{\cal N}} \leq \frac{1}{4}.
\end{equation}
Then,
\[
\left(1 - M  c_0\frac{\sqrt{\ln {\cal N}}}{\sqrt{\cal N}} \right) \geq \frac{3}{4},
\]
and therefore,
\[
  |\xi_k| \leq  \frac{4 c_0}{3}\frac{\sqrt{\ln {\cal N}}}{\sqrt{\cal N}}
\]
for all $k=0,1,2,\dots,M$. Hence, $\sum_{k=1}^{M}  |\xi_k| \leq 1/3$. 
By the Milman's extension of the Dvoretzky's theorem~\cite{milman} we can find $\tau_0=O(1)$   so that 
\[
4 c_0\frac{\sqrt{\ln {\cal N}}}{\sqrt{\cal N}}  \vect t_0 := \tilde{ \vect t}_0 \in H_1(\tau_0)
\]
with probability  $1-1/{\cal N}^{\kappa}$. Therefore,
\[
\!\!\!\!\!\!\!\!\!\!\!\!\!\!\!\!\!\!\!\!\!\!\!\!\!\!\!\!\!\!\!\!
\vect t^{v}_j =  \tilde{\xi}_0  \tilde{ \vect t}_0 + \sum_{k=1}^{M} \xi_k \vect t_{i_k}, \mbox{ where all }  i_k  \in  \mbox{supp}(\vect \chi) \,\mbox{ and } 
   |\tilde{\xi_0}| + \sum_{k=1}^{M}  |\xi_k| \leq  1/3   + 1/3 \leq  2/3  
\]
 and  $\tilde{ \vect t}_0 \in H_1(\tau)$
for all $\tau \geq \tau_0$. Therefore,  $\vect t^{v}_j \subset H(\tau)$ strictly.

To prove the second claim, we repeat the proof  of Theorem~3 from~\cite{Moscoso20b}.  Estimate~(\ref{alphachoice}) implies we can  assume  
$\langle \vect t_i, \vect t_j \rangle =0$ for $i \neq j$,  $i, j \in \mbox{supp}(\vect \chi)$ - this will only replace the constant $c_1$ in~(\ref{new-estimate2}) to $2 c_1 /\sqrt{3}$  at most.  
 Suppose $V^i$ are the $2$-dimensional spaces spanned by  $\vect e$ and $\vect t_i$ 
 for $i \in \mbox{supp}(\vect \chi)$.    By the Milman's extension of the Dvoretzky's theorem~\cite{milman}
   all $ \lambda H(\tau)  \cap V^i $ look like rounded rhombi depicted on Fig.~\ref{phi_map}, and  $ \lambda H_1(\tau) \cap V^i \subset B^i_{\lambda \tau}$ 
 with probability $1-N^{-\kappa}$, where $B^i_{ \lambda \tau}$  is a 2-dimensional $\ell_2$-ball  of radius  $ \lambda \tau c_0  \sqrt{\ln {\cal N}}/\sqrt{\cal N}$.  
  Thus $ \lambda H(\tau)  \cap V^i \subset  H^{i}_{\lambda \tau}$ with probability $1-N^{-\kappa}$,
  where   $H^{i}_{\lambda \tau }$ is the convex hull of $B^i_{\lambda \tau}$ and  a vector $\lambda  \vect f_i$, $\vect f_i= \chi_i \| \vect \chi \|^{-1}_{\ell_1}  \vect t_i$.  
  Then   
$\mbox{supp}(\vect \chi_\tau) = \mbox{supp}(\vect \chi)$, 
 if there exists $\lambda_0$ so that $\chi_i \vect t_i + \vect e$ lies on the flat boundary of  $H^{i}_{\lambda_0}$  for all $i \in \mbox{supp}(\vect \chi)$.

 \begin{figure}[htbp]
\begin{center}
\includegraphics[scale=1]{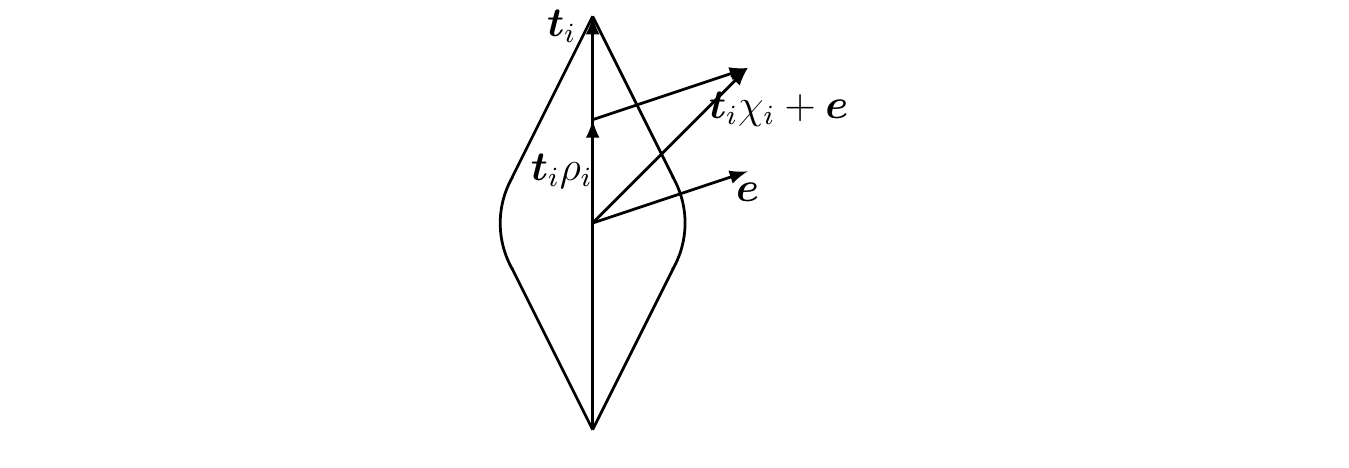}
\caption{ An intersection of $H(\tau)$ with the span$(\vect t_i, \vect e)$ is a rounded rhombus.
 }
\label{phi_map}
\end{center}
\end{figure}

 If $\min_{i \in \mbox{supp}(\vect \chi)} |\chi_i| \geq \gamma  \| \vect \chi \|_{\infty} $, then there exists a constant $c_2=c_2(\gamma)$ such that if 
$ \chi_i \vect t_i + \vect e$  lies on the flat boundary of  $H^{i}_{\lambda}$ for some $i$ and some $\lambda$, then there exists $\lambda_0$ so that
 $ \chi_i \vect t_i + c_2 \vect e$  lies on the flat boundary of  $H^{i}_{\lambda_0}$ for all $i \in \mbox{supp}(\vect \chi)$. 
   If 
 \begin{equation}\label{triangle}
 \frac{ | \langle  \vect t_i,  \chi_i \vect t_i + \vect e \rangle | }{  \| \chi_i \vect t_i + \vect e \|_{\ell_2} } \geq  \frac{ \tau c_0  \sqrt{\ln {\cal N}} }{\sqrt{\cal N} \| \vect f_i \|_{\ell_2}}=
 \frac{ \tau c_0 \| \vect \chi \|_{\ell_1} \sqrt{\ln {\cal N}} }{\sqrt{\cal N}  |\chi_i|},
 \end{equation}
 then $ \chi_i \vect t_i + c_2 \vect e$  lies on the flat boundary of  $H^{i}_{\lambda}$. 
 
 Since $| \langle  \vect t_i,  \chi_i \vect t_i + \vect e \rangle |  \geq  |\chi_i|/2$ by~(\ref{new-estimate1}), inequality~(\ref{triangle}) holds if 
 \[
  \frac{|\chi_i|}{  \| \chi_i \vect t_i + \vect e \|_{\ell_2} } \geq  \frac{ 2 \tau c_0 \| \vect \chi \|_{\ell_1} \sqrt{\ln {\cal N}} }{\sqrt{\cal N}  |\chi_i| }.
 \]   
By~(\ref{new-estimate2})  and using  $\| \vect \chi \|_{\ell_1}  \leq M $ the last inequality is true if
\[
M    \leq  \frac{\sqrt{\cal N}} {  \tau c_1 c_0 \sqrt{\ln {\cal N}} }. 
\]
The last inequality  is true if $\alpha$  in~(\ref{eq:M}) is small enough.
Thus,   $\mbox{supp}(\vect \chi_{\tau})  = \mbox{supp}(\vect \chi)$. 
\end{proof}

\subsection{Proof of Theorem~\ref{d-reduce}}
\label{sec:proof3}

\begin{proof} We need to verify that all conditions of Theorem~\ref{thm-ortho} are satisfied. Choose $c_0$, 
$\tau_0$, ${\cal N}_0$ and $\alpha$  so that Theorem~\ref{thm-ortho} is satisfied with probability $1-\frac{1}{3{\cal N}^{\kappa}}$. Note that we can increase $c_0$, 
$\tau_0$, ${\cal N}_0$ and decrease $\alpha$ in this proof if necessary.
We denote by
$({\bar \Ac} \otimes \Ac)_{k,l}$ the column of ${\bar \Ac} \otimes \Ac$ that arises from a tensor product $ \bar{ \vect a}_k \otimes \vect a_l$.
 If we use all $N^2$ of the data the columns ${\bar \Ac} \otimes \Ac$,
 then 
\[
\!\!\!\!\!\!\!\!\!\!\!\! \langle ({\bar \Ac} \otimes \Ac)_{k,l}, ({\bar \Ac} \otimes \Ac)_{m,n} \rangle =    \sum_{i=1}^{N} \sum_{j=1}^{N}  \bar{a}_{k,i} a_{l,j}  a_{m,i} \bar{a}_{n,j} = \langle \vect a_m, \vect a_k \rangle 
 \langle \vect a_l, \vect a_n \rangle. 
\]
In particular, all columns of   ${\bar \Ac} \otimes \Ac$ have length $1$. Therefore,
\[
\left| \langle \vect t_i,  \vect t_j \rangle \right| = \left| ({\bar \Ac} \otimes \Ac)_{i,i}, ({\bar \Ac} \otimes \Ac)_{j,j} \right| =   \left|  \langle \vect a_i, \vect a_j \rangle \right|^2 \leq \frac{\Delta^2}{N},
\]
and condition~(\ref{no-random-2}) is verified if we choose ${\cal N}_0$ large enough. 

Now we
obtain 
\begin{equation}\label{e-chi2}
\lambda_1 M \leq  \| \vect e\|_{\ell_2} \leq \lambda_2 M
\end{equation}
with high probability. Note that~(\ref{e-chi2}) implies~(\ref{new-estimate2}) because  $ \gamma M \leq \| \vect \chi \|_{\ell_1} \leq  M$.  We write
\[
\| \vect e\|^2_{\ell_2} = \| \vect \chi \|^2_{\ell_1} + 2 \| \vect \chi \|_{\ell_1} \Xi_1+ \Xi_2,
\]
where
\begin{equation}\label{xi1}
\Xi_1= \sum_{k, l, k \neq l} \bar{\rho}_k \rho_l  \langle \vect a_k, \vect a_l \rangle,
\end{equation}
and
\begin{equation}\label{xi2}
\Xi_2=  \sum_{\scriptsize \mbox{all indices different}}  \rho_k \bar{\rho}_l  \bar{\rho}_m \rho_n   \langle \vect a_m, \vect a_k \rangle 
 \langle \vect a_l, \vect a_n \rangle. 
\end{equation}

By Hanson-Wright inequality~(\ref{ourHW})
\[
\mathbb{P} \left( | \Xi_1| >t  \right) \leq 2 \exp \left(-   \frac{t^2/32}{\|{\bf M}\|^2_F} \right)
\]
where ${\bf M}$ is a matrix with components $ |\rho_k \rho_l|  \langle \vect a_k, \vect a_l \rangle$,
$\|{\bf M}\|_F$ is its Frobenius (Hilbert-Schmidt) norm. 
Since
$| \langle \vect a_k, \vect a_l \rangle| \leq \Delta/\sqrt{N}$,
we obtain   $\|{\bf M} \|_F \leq \Delta M/\sqrt{N}$ (in our set-up $\| \vect \rho \|_{\ell_\infty}=1$).
Take $t= \gamma M/8 \leq   \| \vect \chi \|_{\ell_1}/8 $
and obtain  
\[
\!\!\!\!\!\!\!\!\!\!\!\! \mathbb{P} \left( | \Xi_1| > \gamma M/8  \right) \leq 2 \exp \left(- c  N \right),  c=c(\gamma),
\]
which is negligible for large $N$. Thus
\begin{equation}\label{ono}
 | \Xi_1|  \leq \frac{\|\vect \chi  \|_{\ell_1}}{8} 
\end{equation}
with probability $1- 2 \exp  \left(- c  N \right)$.
Observe that 
$\Xi_2= (\Xi_1)^2- \Xi_3$, where
\[
\Xi_3=  \sum_{\scriptsize m =l \mbox{ or } k=n \mbox{ or both}}  \rho_k \bar{\rho}_l  \bar{\rho}_m \rho_n   \langle \vect a_m, \vect a_k \rangle 
 \langle \vect a_l, \vect a_n \rangle. 
\] 
For $\Xi_3$ we can use a deterministic estimate:
\[
|\Xi_3| \leq 2 \frac{c^2_0 M^3}{N} \leq c \alpha \frac{\|\vect \chi  \|^2_{\ell_1}}{\sqrt{\ln N}} \leq  \frac{\|\vect \chi  \|^2_{\ell_1}}{16} .
\]
For $(\Xi_1)^2$ we use~(\ref{ono}). Using the union bound, we obtain
\begin{equation}\label{e-chi}
\frac{1}{2} \|\vect \chi  \|^2_{\ell_1} \leq \| \vect e\|^2_{\ell_2}  \leq \frac{3}{2} \|\vect \chi  \|^2_{\ell_1} 
\end{equation}
with probability  $1- 2 \exp\left(- c   N \right)$. Thus, (\ref{e-chi2}) holds
with probability  $1- 2 \exp\left(- c   N \right)$.

We will now prove~(\ref{no-random}).  For  $m \not\in \mbox{supp}(\vect \chi)$,  consider a random variable
\begin{equation}\label{e_z}
\!\!\!\!\!\!\!\!\!\!\!\! \!\!\!\!\!\!\!\!\!\!\!\!   \Theta_m =\langle \vect t_m,  \vect e \rangle   
= \sum_{k, l, k \neq l} \bar{\rho}_k \rho_l \langle \vect t_m,  ({\bar \Ac} \otimes \Ac)_{k,l} \rangle 
= \sum_{k, l, k \neq l} \bar{\rho}_k \rho_l  \langle \vect a_m, \vect a_k \rangle  \langle \vect a_m, \vect a_l \rangle .
\end{equation}
We have
\begin{equation}\label{easycase}
 \left| \langle \vect t_m,  ({\bar \Ac} \otimes \Ac)_{k,l} \rangle \right| =    \left|  \langle \vect a_m, \vect a_k \rangle  \langle \vect a_m, \vect a_l \rangle \right| \leq \frac{\Delta^2}{N}
\end{equation}
if $m \neq k$, and $m \neq l$. If ${\bf M}$ is a matrix with components $ |\rho_k \rho_l|  \langle \vect a_m, \vect a_k \rangle  \langle \vect a_m, \vect a_l \rangle$, 
then  $\|{\bf M} \|_F \leq \Delta^2 M/N$. 
Using~(\ref{e-chi2})  choose $t = c_0 \frac{\sqrt{\ln N^2}}{N} \| \vect e \|_{\ell_2} > c_0 \frac{\gamma}{2} \frac{M  \sqrt{\ln N}}{N} $ in Hanson-Wright inequality~(\ref{ourHW})
to obtain: 
\[
\!\!\!\!\!\!\!\!\!\!\!\!\!\!\!\!\! \!\!\!\!\!\!\!\!\!\!\!\!   \mathbb{P} \left( | \Theta_m| >  c_0 \frac{\sqrt{\ln N^2}}{N} \| \vect e \|_{\ell_2} 
\right) \leq \mathbb{P} \left( | \Theta_m| >  c_0 \frac{\gamma}{2} \frac{M  \sqrt{\ln N}}{N} 
\right) \leq 
2 \exp \left(- \frac{ \gamma^2 c^2_0 \ln N}{128 \Delta^4}\right). 
\]
Then~(\ref{no-random}) holds with probability $1-\frac{1}{3{\cal N}^{\kappa}}$ if $c_0$ is large enough.

We will now prove~(\ref{new-estimate1}). For  $m \in \mbox{supp}(\vect \chi)$  decompose
\[
\!\!\!\!\!\!\!\!\!\!\!\! \!\!\!\!\!\!\!\!\!\!\!\!   \Theta_m =\langle \vect t_m,  \vect e \rangle   =  \Theta^1_m+ \Theta^2_m
\]
where
\[
 \Theta^1_m= \sum_{k, l, k \neq l, k \neq m, l \neq m } \bar{\rho}_k \rho_l  \langle \vect a_m, \vect a_k \rangle  \langle \vect a_m, \vect a_l \rangle
\]
and
\begin{equation}\label{bad_guy}
 \Theta^2_m= \sum_{k,  k \neq m  } \left( \bar{\rho}_m \rho_k +   \bar{\rho}_k \rho_m \right)    \langle \vect a_m, \vect a_k \rangle
\end{equation}
The distribution of the random variable $ \Theta^1_m$ has exactly the same behavior as $\Theta_m$ for $m \not\in \mbox{supp}(\vect \chi)$. We therefore have
\[
\!\!\!\!\!\!\!\!\!\!\!\! \!\!\!\!\!\!\!\!\!\!\!\!  
\mathbb{P} \left( | \Theta^1_m| >  \frac{\gamma}{4}   \| \chi \|_{\ell_\infty}  \right) \leq 2 \exp\left(-  c \frac{N^2}{\Delta^4 M^2 }\right) \leq  2 \exp(- \tilde{c} \ln N/\alpha^2)
 \leq  \frac{1}{6}\frac{1}{{\cal N}^\kappa},  
\]
  by Hanson-Wright inequality~\ref{ourHW} if $\alpha$ is small enough.  .
 If $m=l$ (or   $m=k$) then
\begin{equation}\label{too_weak}
\left| \langle \vect t_m,  ({\bar \Ac} \otimes \Ac)_{k,m} \rangle \right| =    \left|  \langle \vect a_m, \vect a_k \rangle  \right| \leq \frac{\Delta}{\sqrt{N}}.
\end{equation}
If we condition on $\vect \rho_m$, then  $\Theta^2_m$ is a sum of independent random variables. 
Therefore  by  Hoeffding's inequality 
\[
\mathbb{P}\left( |\Theta^2_m| > t \right) \leq 2 \exp\left( -c \frac{t^2}{ b^2} \right),
\mbox{ where } 
b^2 \leq  \frac{c^2_0 M}{N} \leq \frac{\Delta^2 \alpha}{\ln N}.
\]  
Choosing $t$ appropriately we obtain 
\[
\mathbb{P} \left( | \Theta^2_m| > \frac{\gamma}{4} \| \chi \|_{\ell_\infty} \right) \leq   \frac{1}{6} \frac{1}{{\cal N}^\kappa}.
\]
by choosing $\alpha$ small enough. Using the union bound we conclude that
 \[
\mathbb{P} \left( | \Theta_m| > \frac{\gamma}{2} \| \chi \|_{\ell_\infty} \right) \leq   \frac{1}{3}  \frac{1}{{\cal N}^\kappa}
\] 
for $m \in \mbox{supp}(\vect \chi)$.
Applying the union bound we conclude that conditions~(\ref{new-estimate1}),~(\ref{no-random}) and estimates in Theorem~\ref{thm-ortho} hold with probability $1-\frac{1}{{\cal N}^{\kappa}}$.
This completes the proof.
\end{proof}
\begin{remark}\label{new_e}
 The proof of Theorem~\ref{d-reduce} reveals why we had to assume~(\ref{new-estimate1}) for  $m \in \mbox{supp}(\vect \chi)$. When   $m \not\in \mbox{supp}(\vect \chi)$ then $\langle \vect t_m,  \vect e \rangle$ is
 estimated in~(\ref{e_z}) using~(\ref{easycase}). When   $m \in \mbox{supp}(\vect \chi)$ then $\langle \vect t_m,  \vect e \rangle$ contains  $ \Theta^2_m$ given by~(\ref{bad_guy}). For  $ \Theta^2_m$ we cannot use~(\ref{easycase}),
and  we have to use a weaker 
 estimate~(\ref{too_weak}).
  \end{remark}

\subsection{A deterministic version of Theorem \ref{d-reduce}}
\label{sec:proof4}

\begin{theorem}\label{d-reduce-norandom} 
Suppose $X$ is a solution of~(\ref{eq:vec1}), $ \vect \chi =\mbox{diag}(X)$ is $M$-sparse, $\vect{d} \in \mathbb{C}^{\cal N}$, ${\cal N} =N^2$, and
$T=({\bar \Ac} \otimes \Ac)_{\vect \chi}: \mathbb{C}^{\cal K} \to \mathbb{C}^{\cal N}$.
Fix $\beta>1$,  and draw  $\Sigma={\cal N}^{\beta}$  columns for 
$\Cc$, independently, from the  uniform distribution on 
$\mathbb{S}^{{\cal N}-1}$ and define $\gamma$ as in (\ref{def:gamma}) and $\Delta$ as in (\ref{def:Delta}).
Then, for any $\kappa > 0$, there are constants $\alpha=\alpha(\kappa, \gamma, \Delta)$, $\tau=\tau(\kappa, \beta)$, and ${\cal N}_0= {\cal N}_0(\kappa, \beta, \gamma, \Delta)$ such that
the following holds.
  If  
$M \leq \alpha \sqrt{N}$ 
and $ \vect \chi_{\tau}$  is the solution~(\ref{rho_tt}),
 then
$ \mbox{supp}(\vect \chi) = \mbox{supp}(\vect \chi_{\tau})$ for  all ${\cal N}>{\cal N}_0$ 
with  probability  $1-1/{\cal N}^{\kappa}$. 
\end{theorem}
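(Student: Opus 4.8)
The plan is to deduce Theorem~\ref{d-reduce-norandom} by verifying, for the fixed (now non-random) signal $\vect\chi$, the hypotheses of Theorem~\ref{thm-ortho}, following the proof of Theorem~\ref{d-reduce} step by step but replacing every probabilistic concentration estimate (Hanson--Wright, Hoeffding) by a deterministic worst-case bound built solely from the coherence parameter $\Delta$ of (\ref{def:Delta}). First I would fix $c_0,\tau_0,\mathcal N_0,\alpha$ so that Theorem~\ref{thm-ortho} applies, and reuse verbatim the computation $|\langle\vect t_i,\vect t_j\rangle| = |\langle\vect a_i,\vect a_j\rangle|^2 \le \Delta^2/N$, which gives the incoherence hypothesis (\ref{no-random-2}) for $\mathcal N_0$ large, since $\Delta^2/N \le c_0\sqrt{\ln\mathcal N}/\sqrt{\mathcal N}$ eventually. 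The sparsity hypothesis (\ref{eq:M}) is then immediate from $M\le\alpha\sqrt N$, because $\sqrt N\le\sqrt{\mathcal N}/\sqrt{\ln\mathcal N}$ for large $N$ when $\mathcal N=N^2$.

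Next I would control $\|\vect e\|_{\ell_2}$ to obtain (\ref{new-estimate2}). Using the same expansion $\|\vect e\|_{\ell_2}^2 = \|\vect\chi\|_{\ell_1}^2 + 2\|\vect\chi\|_{\ell_1}\Xi_1 + \Xi_2$ with $\Xi_1,\Xi_2$ from (\ref{xi1})--(\ref{xi2}), I would bound $\Xi_1$ deterministically by the triangle inequality, $|\Xi_1|\le \frac{\Delta}{\sqrt N}\big(\sum_k|\rho_k|\big)^2\le \frac{\Delta M^2}{\sqrt N}$, and the term $\Xi_3$ in $\Xi_2=\Xi_1^2-\Xi_3$ by $O(\Delta^2 M^3/N)$. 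When $M\le\alpha\sqrt N$ with $\alpha$ small these are at most $\|\vect\chi\|_{\ell_1}/8$ and $\|\vect\chi\|_{\ell_1}^2/16$ respectively (using $\gamma M\le\|\vect\chi\|_{\ell_1}\le M$), which reproduces (\ref{e-chi}), hence (\ref{e-chi2}) and (\ref{new-estimate2}). This is already the step that first imposes $M=O(\sqrt N)$, through the requirement $\Delta M^2/\sqrt N\lesssim\gamma M$.

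The heart of the argument is the estimation of $\langle\vect t_m,\vect e\rangle$. For $m\in\mbox{supp}(\vect\chi)$ I would split $\langle\vect t_m,\vect e\rangle=\Theta_m^1+\Theta_m^2$ as in the proof of Theorem~\ref{d-reduce} and bound both deterministically: the ``good'' part $\Theta_m^1$ by $\Delta^2 M^2/N$ via (\ref{easycase}), and the ``bad'' part $\Theta_m^2$ of (\ref{bad_guy}) by $\tfrac{2\Delta}{\sqrt N}\sum_k|\rho_k|\le 2\Delta M/\sqrt N$ via (\ref{too_weak}); both fall below $\min_i|\chi_i|/2=\gamma/2$ once $\alpha$ is small, giving the relaxed support condition (\ref{new-estimate1}). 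For $m\notin\mbox{supp}(\vect\chi)$ I would start from (\ref{e_z}), bound each summand by (\ref{easycase}), $|\langle\vect a_m,\vect a_k\rangle\langle\vect a_m,\vect a_l\rangle|\le\Delta^2/N$, and sum the at most $M^2$ terms (with $|\rho_k|\le1$) to get the deterministic bound $|\langle\vect t_m,\vect e\rangle|\le\Delta^2 M^2/N$, which I combine with the lower bound on $\|\vect e\|_{\ell_2}$ to verify (\ref{no-random}). Since the only remaining randomness is that of the Noise Collector inside the Dvoretzky/Milman step of Theorem~\ref{thm-ortho}, a single union bound would then deliver the probability $1-1/\mathcal N^{\kappa}$.

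The main obstacle is precisely the non-support estimate (\ref{no-random}), exactly as anticipated by Remark~\ref{new_e}. Without random phases there is no cancellation in the bilinear sum (\ref{e_z}), so the deterministic bound $\Delta^2 M^2/N$ is a full factor $\sim M$ larger than the typical size $\Delta^2 M/N$ (the Frobenius norm of the Hanson--Wright matrix) that drove Theorem~\ref{d-reduce}. Matching this worst-case quantity against the admissible orthogonality scale $c_0\sqrt{\ln\mathcal N/\mathcal N}\,\|\vect e\|_{\ell_2}$, rather than against the concentrated value, is what forces the recoverable sparsity down from the $O(N/\sqrt{\ln N})$ of the probabilistic theorem to the conservative deterministic threshold; showing that this degraded estimate still closes the geometric argument of Theorem~\ref{thm-ortho} is the delicate point of the proof.
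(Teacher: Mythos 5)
Your proposal follows essentially the same route as the paper's own proof of Theorem~\ref{d-reduce-norandom}: you verify the hypotheses of Theorem~\ref{thm-ortho} deterministically by replacing the Hanson--Wright and Hoeffding concentration steps of Theorem~\ref{d-reduce} with worst-case triangle-inequality bounds in terms of $\Delta$, using the same computation for (\ref{no-random-2}), the same $\Xi_1,\Xi_2$ expansion to control $\|\vect e\|_{\ell_2}$ from above and below, the same $\Theta_m^1+\Theta_m^2$ split for $m\in\mathrm{supp}(\vect\chi)$ with the weaker bound (\ref{too_weak}) on the diagonal terms, and a single union bound over the Noise Collector's randomness. The ``delicate point'' you flag at the end --- matching the worst-case bound $\Delta^2M^2/N$ against $c_0\sqrt{\ln\mathcal N/\mathcal N}\,\|\vect e\|_{\ell_2}$ for $m\notin\mathrm{supp}(\vect\chi)$ --- is handled just as tersely in the paper (``follows either for choosing $\alpha$ small or $\ln N$ large''), so your account is faithful to the published argument.
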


\begin{proof}  We need to verify that all conditions of Theorem~\ref{thm-ortho} are satisfied non-probabilistically.
Conditions~(\ref{no-random-2}) is already verified in the proof of Theorem~\ref{d-reduce} under even weaker assumptions than in 
Theorem~\ref{d-reduce-norandom}. Therefore we only need to verify 
estimates~(\ref{new-estimate2}),~(\ref{new-estimate1}) and~(\ref{no-random}).

Since
\[
\vect e =\sum_{k \neq l} \bar{\rho}_k \rho_l ({\bar \Ac} \otimes \Ac)_{k,l}, 
\]
we have
\[
\!\!\!\!\!\!\!\!\!\!\!\!\!\!\!\!\!\!\!\!\!\!\!\!\!\!\!\!\!\!\!\!\!\!\!\!  \| \vect e\|^2_{\ell_2} \leq  \frac{2 \Delta}{\sqrt{N}} \sum_{\scriptsize{all~indices}} \chi_k \left|\rho_{m_1} \right| \left|\rho_{m_2} \right|
+ \frac{\Delta^2}{N} \sum_{\scriptsize{all~indices}}  \left|\rho_{m_1} \right| \left|\rho_{m_2} \right| \left|\rho_{k_1} \right| \left|\rho_{k_2} \right| 
+  \sum_{k, m} \chi_k \chi_m   
\] 
\[
  \!\!\!\!\!\!\!\!\!\!\!\! \!\!\!\!\!\!\!\!\!\!\!\! \!\!\!\!\!\!\!\!\!\!\!\! \leq  2 \Delta \alpha  \| \vect \chi \|_{\ell_1}\| \vect \rho \|^2_{\ell_2}
+ \Delta^2 \alpha^2   \| \vect \rho \|^4_{\ell_2} + \| \vect \chi \|^2_{\ell_1}  =  (  1+ \Delta \alpha )^2  \| \vect \chi \|^2_{\ell_1}.
\]
Thus estimate~(\ref{new-estimate2}) holds.

A non-probabilistic version of estimate~(\ref{new-estimate1}) is as follows.  For $m \in \mbox{supp}(\vect \chi)$ we have 
\[
\!\!\!\!\!\!\!\!\!\!\!\! \!\!\!\!\!\!\!\!\!\!\!\!  \left| \langle \vect t_m,  \vect e \rangle \right|   = \left|  \sum_{k, l, k \neq l} \bar{\rho}_k \rho_l \langle \vect t_m,  ({\bar \Ac} \otimes \Ac)_{k,l} \rangle \right| 
\leq 2  \sum_{k, k \neq m}  \left| \rho_k \right|  \left| \rho_m \right| \left| \langle \vect t_m,  ({\bar \Ac} \otimes \Ac)_{k,m} \rangle \right| 
\]
\[
\!\!\!\!\!\!\!\!\!\!\!\! \!\!\!\!\!\!\!\!\!\!\!\!  +  \sum_{k,l, k \neq l \neq m}  \left| \rho_k \right|  \left| \rho_l \right|  \left| \langle \vect t_m,  ({\bar \Ac} \otimes \Ac)_{k,l} \rangle \right| \leq 
\frac{2 \Delta}{\sqrt{N}}   \sum_{k}  \left| \rho_k \right|  \left| \rho_m \right|  + \frac{\Delta^2}{N}   \sum_{k, l}  \left| \rho_k \right|  \left| \rho_l \right| 
\]
\[
\!\!\!\!\!\!\!\!\!\!\!\! \!\!\!\!\!\!\!\!\!\!\!\!  \leq \left(  \frac{2 \Delta M}{\sqrt{N}} +  \frac{\Delta^2 M^2}{N} \right) \| \vect \rho \|^2_{\ell_\infty}
= \left(  \frac{2 \Delta M}{\sqrt{N}} +  \frac{\Delta^2 M^2}{N} \right) \| \vect \chi \|_{\ell_\infty}
 \leq \frac{\gamma}{2} \| \vect \chi \|_{\ell_\infty}.
\]
if $\alpha$ is small enough.

We now obtain a lower bound on $\| \vect e \|_{\ell_2}$. For $\Xi_1$ and $\Xi_2$ in~(\ref{xi1}) and~(\ref{xi2}), respectively, we have
\[
| \Xi_1| \leq \frac{\Delta M^2 }{\sqrt{N}} \leq \Delta \alpha M,~~| \Xi_2 | \leq \frac{\Delta^2 M^4 }{N} \leq \Delta^2 \alpha^2 M^2. 
\]  
Since
\[
\| \vect e\|^2_{\ell_2} = \| \vect \chi \|^2_{\ell_1} + 2 \| \vect \chi \|_{\ell_1} \Xi_1+ \Xi_2, \mbox{ and }  \| \vect \chi \|_{\ell_1}=M
\]
we can choose $\alpha$ so that
\[
M/2 = \| \vect \chi \|_{\ell_1}/2 \leq \| \vect e\|^2_{\ell_2}. 
\]
To show~(\ref{no-random}) observe that
\[
 \left| \langle \vect t_m,  ({\bar \Ac} \otimes \Ac)_{k,l} \rangle \right| =    \left|  \langle \vect a_m, \vect a_k \rangle  \langle \vect a_m, \vect a_l \rangle \right| \leq \Delta^2/N,
\]
because $m \neq k$, and $m \neq l$. Therefore
\[
\!\!\!\!\!\!\!\!\!\!\!\! \!\!\!\!\!\!\!\!\!\!\!\ \left| \langle \vect t_m,  \vect e \rangle   \right|
= \left| \sum_{k, l, k \neq l} \bar{\rho}_k \rho_l \langle \vect t_m,  ({\bar \Ac} \otimes \Ac)_{k,l} \rangle \right|
\leq \frac{\Delta^2 M^2}{N} \leq \frac{\Delta^2 \alpha \| \vect e\|_{\ell_2}}{\sqrt{N}},  
\]
and~(\ref{no-random})  follows either for choosing $\alpha$ small or $\ln N$ large.
\end{proof}

\subsection{Hansen-Wright's Inequality for bounded symmetric random variables}
\label{subsec:HW}

For simplicity of presentation
 all random variables here are real.   Suppose $X_i$ are independent sub-gaussian random variables, $\mathbb{E}(X_i)=0$, and the sub-gaussian norms $\| X_i \|_{\psi_2} \leq K$. Consider
\[
\Xi= \sum_{i,j} X_j X_i m_{ij},
\]
where $m_{ij}$ are entries of a deterministic $M \times M$ diagonal-free (i.e. $m_{ii}=0$) matrix ${\bf M}$. 
The Hanson-Wright inequality (see e.g.~\cite{RudVersh}) is
\begin{equation}\label{classical}
\mathbb{P} \left( | \Xi| >t  \right) \leq 2 \exp \left(- c \min \left(  \frac{t^2}{K^4 \|{\bf M}\|^2_F},  \frac{t}{K^2 \|{\bf M}\|}\right) \right)
\end{equation}
where  $\|{\bf M}\|_F$ is the Frobenius (Hilbert-Schmidt) norm of ${\bf M}$, and $\|{\bf M}\|$ is its operator norm. If we use this inequality in the proof of our Theorem~\ref{d-reduce}, then 
the result becomes weaker than Theorem~\ref{pnas} by a factor of $\sqrt{\ln N}$ because in our setting 
\[
 \min \left(  \frac{t^2}{K^4 \|{\bf M}\|^2_F},  \frac{t}{K^2 \|{\bf M}\|}\right) = \frac{t}{K^2 \|{\bf M}\|}.
\]
In order to obtain Theorem~\ref{d-reduce} in its present form, we need a slight strengthening of~(\ref{classical}). Our proof is a modification of two proofs 
from~\cite{vershynin} and~\cite{RudVersh}. It  may already exist in the literature, but we were not able to find it. Therefore we provide it here for the reader's convenience. We assume that our random variables are symmetric and bounded. 
This holds if a random variable is uniformly distributed on the (complex) unit circle as in Theorem~\ref{d-reduce}.  

\begin{theorem}\label{hw-new}(Hansen-Wright inequality for bounded symmetric random variables)
Suppose $X_i$ are independent symmetric random variables, with $\| X_i \|_{\ell_\infty} \leq K$.  Let
$\Xi= \sum_{i \neq j} X_j X_i m_{ij}$. Then
\begin{equation}\label{ourHW}
\mathbb{P} \left( | \Xi| >t  \right) \leq 2 \exp \left(-  \frac{t^2/32}{K^4 \|{\bf M}\|^2_{F}} \right).
\end{equation}
\end{theorem}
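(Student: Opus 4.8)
The plan is to bound the exponential moment $\mathbb{E}\exp(\lambda\Xi)$ and then apply Chernoff's bound, optimizing over $\lambda>0$. The first move is to spend the two hypotheses. Since each $X_i$ is symmetric, it has the same law as $\epsilon_i X_i$ with $\epsilon_i$ an independent Rademacher sign, so I would condition on the magnitudes $a_i:=|X_i|\le K$ (which are independent of the signs) and write $X_i=\epsilon_i a_i$. This turns $\Xi$ into a diagonal-free Rademacher quadratic form $\Xi=\epsilon^{\top} A\,\epsilon$ with $A_{ij}=m_{ij}a_ia_j$, whose Frobenius norm obeys $\|A\|_F\le K^2\|{\bf M}\|_F$. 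It therefore suffices to prove the tail for a Rademacher chaos with $\|{\bf M}\|_F$ replaced by $\|A\|_F$: the estimate I will obtain depends on the $a_i$ only through $\|A\|_F$, so it is uniform in the magnitudes and integrating over the $|X_i|$ preserves it. The factor $K^4$ in the statement is exactly the price of this reduction.

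To estimate $\mathbb{E}\exp(\lambda\,\epsilon^{\top} A\epsilon)$ I would follow the decoupling route of \cite{RudVersh,vershynin}. Because $A$ is diagonal-free, a standard decoupling inequality replaces the chaos by its decoupled version at the cost of an absolute constant, $\mathbb{E}\exp(\lambda\,\epsilon^{\top} A\epsilon)\le\mathbb{E}\exp(4\lambda\,\epsilon^{\top} A\epsilon')$ with $\epsilon'$ an independent copy. Conditioning on $\epsilon$, the inner expectation is over a linear form $\sum_j(A\epsilon)_j\epsilon_j'$ in the fresh signs; applying $\cosh x\le e^{x^2/2}$ coordinatewise gives $\mathbb{E}_{\epsilon'}\exp(4\lambda\,\epsilon^{\top} A\epsilon')\le\exp\!\big(8\lambda^2\|A\epsilon\|_{\ell_2}^2\big)$. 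Taking the outer expectation reduces everything to controlling the exponential moment of the \emph{second} quadratic form, $\mathbb{E}_\epsilon\exp\!\big(8\lambda^2\,\epsilon^{\top} A^2\epsilon\big)$, whose mean is $8\lambda^2\,\mathrm{tr}(A^2)=8\lambda^2\|A\|_F^2$.

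The crux—and the place where a naive application of the classical inequality~(\ref{classical}) would cost the factor $\sqrt{\ln N}$—is this last MGF. The standard estimate of $\mathbb{E}\exp(s\,\epsilon^{\top} A^2\epsilon)$ produces both $\mathrm{tr}(A^2)=\|A\|_F^2$ and the operator norm $\|A^2\|=\|A\|^2$, and it is precisely the operator-norm term that generates the unwanted sub-exponential contribution $t/\|A\|$. The modification I would make is to keep $\lambda$ small: using, via comparison with the corresponding Gaussian chaos, the positive-semidefinite bound $\mathbb{E}\exp(s\,\epsilon^{\top} A^2\epsilon)\le\exp\!\big(s\|A\|_F^2/(1-2s\|A\|^2)\big)$ valid for $2s\|A\|^2<1$, and restricting to $2s\|A\|^2\le\tfrac12$ (that is, $\lambda\lesssim 1/\|A\|$), the denominator stays above $\tfrac12$ and the whole chain collapses to the clean sub-Gaussian estimate $\mathbb{E}\exp(\lambda\Xi)\le\exp\!\big(c\,\lambda^2\|A\|_F^2\big)$ with an absolute $c$. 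I expect this to be the main obstacle, both because it is where symmetry and boundedness must be spent to avoid the operator norm and because it silently fixes the range of validity: the Chernoff optimum $\lambda^{\ast}\sim t/\|A\|_F^2$ respects $\lambda^{\ast}\lesssim1/\|A\|$ only for $t\lesssim\|A\|_F^2/\|A\|$, which is exactly the regime in which a purely Frobenius tail can hold and the regime in which Theorem~\ref{d-reduce} invokes it.

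Finally I would substitute $\lambda^{\ast}$ into $\mathbb{E}\exp(\lambda\Xi)\le\exp\!\big(c\,\lambda^2\|A\|_F^2\big)$, recall $\|A\|_F\le K^2\|{\bf M}\|_F$, and add the symmetric bound for $\mathbb{P}(\Xi<-t)$ (obtained identically with $-\lambda$, or by replacing $A$ with $-A$), to reach $\mathbb{P}(|\Xi|>t)\le 2\exp\!\big(-t^2/(32\,K^4\|{\bf M}\|_F^2)\big)$. Tracking the decoupling constant, the factor lost in the $\cosh$ step, and the factor from bounding the denominator below by $\tfrac12$ is the only bookkeeping needed to land on the explicit constant $32$.
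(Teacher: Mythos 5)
Your skeleton---symmetrize to reduce to a Rademacher chaos with matrix $A$, $\|A\|_F\le K^2\|{\bf M}\|_F$, then Chernoff plus decoupling at the cost of a factor $4$ in the exponent---is the same as the paper's. The genuine gap is in what you do after decoupling. By conditioning on $\epsilon$, applying $\cosh x\le e^{x^2/2}$ to the fresh signs, and then facing $\mathbb{E}\exp\bigl(8\lambda^2\,\epsilon^{\top}A^2\epsilon\bigr)$, you re-import the operator norm, and your fix (restricting to $2s\|A\|^2\le 1/2$, i.e.\ $\lambda\lesssim 1/\|A\|$) yields the sub-Gaussian tail only for $t\lesssim\|A\|_F^2/\|A\|$. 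But in that range the classical inequality~(\ref{classical}) already returns its sub-Gaussian branch, since there the minimum is attained by $t^2/(K^4\|{\bf M}\|_F^2)$. The entire point of Theorem~\ref{hw-new} is the complementary regime: the paper states explicitly, just above the theorem, that in its setting the minimum in~(\ref{classical}) equals $t/(K^2\|{\bf M}\|)$, i.e.\ $t\gtrsim K^2\|{\bf M}\|_F^2/\|{\bf M}\|$, and both invocations in the proof of Theorem~\ref{d-reduce} ($\Xi_1$ with $t\sim\gamma M$ and $\|{\bf M}\|_F\lesssim\Delta M/\sqrt N$; $\Theta_m$ with $t\sim M\sqrt{\ln N}/N$ and $\|{\bf M}\|_F\lesssim\Delta^2 M/N$) live there. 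So your remark that the restriction covers ``exactly the regime in which Theorem~\ref{d-reduce} invokes it'' is backwards: your argument establishes essentially nothing beyond~(\ref{classical}) and misses the case the theorem was written for.

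The paper closes this gap by never forming the quadratic form $\epsilon^{\top}A^2\epsilon$: after decoupling it writes $\mathbb{E}\,e^{4\lambda\tilde\Xi}$ as the product over pairs $\prod_{i\ne j}\mathbb{E}\,e^{4\lambda m_{ij}X_iX_j'}$ and bounds each factor by $e^{8\lambda^2 m_{ij}^2}$ using the symmetry of $X_iX_j'$ together with $|X_iX_j'|\le 1$, which gives $\mathbb{E}\,e^{\lambda\Xi}\le e^{8\lambda^2\|{\bf M}\|_F^2}$ for \emph{every} $\lambda>0$ and hence the unrestricted Gaussian tail with the constant $32$. That per-factor bound is the idea your write-up is missing. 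Be aware, though, that the factorization treats the summands $m_{ij}X_iX_j'$ as independent, which they are not (distinct pairs share an $X_i$ or an $X_j'$), so the step that would rescue your argument is precisely the step of the paper's proof that requires further justification; the obstacle you ran into with the operator norm is not an artifact of your particular route, and you should not expect to remove the restriction on $t$ by bookkeeping alone.
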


\begin{proof} 
By replacing  $X_i$ with $X_i/K$  we can assume  $K=1$.  By Chebyshev's inequality 
\begin{equation}\label{unos}
\mathbb{P} \left(  \Xi >t  \right) = \mathbb{P} \left(  e^{ \lambda \Xi} > e^{\lambda t}  \right) \leq e^{-\lambda t} \mathbb{E} \left( e^{\lambda \Xi} \right)
\end{equation}
for any $\lambda >0$. We now use decoupling.   Consider independent Bernoulli random variables $\mu_i = 0$ or $1$ with probability  $1/2$. Since
$\mathbb{E}(\mu_i)(1- \mu_j) =1/4$ for $i \neq j$ we conclude $\Xi = 4 \mathbb{E}_{\mu} \Xi_\mu$, where
\[
\Xi_{\mu}=  \sum_{i \neq j} \mu_j (1- \mu_i) X_j X_i m_{ij},
\]
and $\mathbb{E}_{\mu}$ is conditional expectation with respect to $\mu=(\mu_1, \dots, \mu_M)$. Using independence of $X=(X_1, \dots , X_M)$ and $\mu$, and applying  Jensen's inequality
we obtain
\[
\mathbb{E} \left( e^{\lambda \Xi} \right)=\mathbb{E}_{X} \left( e^{\lambda \Xi} \right) \leq \mathbb{E}_{X} \mathbb{E}_{\mu} \left( e^{4 \lambda \Xi_{\mu}} \right)
= \mathbb{E}_{\mu}  \mathbb{E}_{X} \left( e^{4 \lambda \Xi_{\mu}} \right)
\]
where $\mathbb{E}_{X}$ is conditional expectation with respect to $X$. This implies there exist a realization of $\mu$ such that
$\mathbb{E}_{X} \left( e^{\lambda \Xi} \right) \leq \mathbb{E}_{X} \left( e^{4 \lambda \Xi_{\mu}} \right)$
for this $\mu$. Fix this $\mu$ and the corresponding  set of indices $\Lambda_{\mu}=\{i | \delta_i =1\}$. Then we can write
$\Xi_{\mu} = \sum_{i \neq j, i \in \Lambda_{\mu}, j \in  \Lambda^c_{\mu}} X_j X_i m_{ij}$.  Since  the random variables $X_i$, $i \in \Lambda_{\mu}$
and  $X_i$, $i \in \Lambda^c_{\mu}$ are independent,  their distribution will not change if we replace  $X_i$, $i \in \Lambda^c_{\mu}$ by 
$X^{'}_i$, $i \in \Lambda^c_{\mu}$, where $X^{'}_i$ is an independent copy of $X_i$. In other words we have
\[
\mathbb{E}_{X, X'} \left( e^{\lambda \Xi} \right) \leq  \mathbb{E}_{X} \left( e^{4 \lambda \tilde{\Xi}_{\mu}} \right),
\mbox{ where }
\tilde{\Xi}_{\mu} =  \sum_{i \neq j, i \in \Lambda_{\mu}, j \in  \Lambda^c_{\mu}} X^{'}_j X_i m_{ij}. 
\]
We now claim that
\[
\mathbb{E}_{X, X'} \left( e^{4 \lambda \tilde{\Xi}_{\mu}} \right) \leq  \mathbb{E}_{X, X'} \left( e^{4 \lambda \tilde{\Xi}} \right), 
\mbox{ where }
\tilde{\Xi} =  \sum_{i \neq j} X^{'}_j X_i m_{ij}. 
\]
Indeed,  Lemma 6.1.2  in~\cite{vershynin} states that
$\mathbb{E}  \left(F(Y) \right) \leq   \mathbb{E}  \left(F(Y+Z) \right)$ for any convex function $F$, if $Y$ and $Z$ are independent and $\mathbb{E}(Z)=0$. In our case
we take $F(x)=e^{4 \lambda x}$,  $Y= \tilde{\Xi}_{\mu}$ and $Z= \tilde{\Xi} - \tilde{\Xi}_{\mu}$.  If we condition on $X_i$, $i \in \Lambda_{\mu}$ and  $X^{'}_i$, $i \in \Lambda^c_{\mu}$, 
then $Y$ is fixed,  $Z$ is independent $Y$ and its conditional expectation is zero. Hence the following decoupling estimate is obtained.
\begin{equation}\label{decouple}
\mathbb{E} \left( e^{\lambda \Xi} \right)  \leq  \mathbb{E} \left( e^{4 \lambda \tilde{\Xi}} \right), 
\mbox{ where }
\tilde{\Xi} =  \sum_{i \neq j} X^{'}_j X_i m_{ij}. 
\end{equation}
By independence
\[
 \mathbb{E} \left( e^{4 \lambda \tilde{\Xi}} \right) = \prod_{i \neq j} \mathbb{E} \left( e^{4 \lambda m_{ij} X_i X^{'}_j} \right). 
\]
Since random variables are symmetric
\[
\!\!\!\!\!\!\!\!\!\!\!\!\!\!\!\!\!\!\!\!\!\!\!\!\!\!\!\!\!\!\!\!\!
 \mathbb{E} \left( e^{4 \lambda m_{ij} X_i X^{'}_j} \right) = \frac{1}{2}   \mathbb{E} \left( e^{4 \lambda m_{ij} \left| X_i X^{'}_j\right| }  + e^{- 4 \lambda m_{ij} \left| X_i X^{'}_j\right|} \right)
\leq    \mathbb{E} \left(  e^{8 \lambda^2 m^2_{ij} \left| X_i X^{'}_j\right|^2 }   \right) \leq e^{8  \lambda^2 m^2_{ij}}.
\]
Using the last two estimates in~(\ref{decouple}) we obtain
\[
\mathbb{E} \left( e^{\lambda \Xi} \right)  \leq    e^{8  \lambda^2 \|{\bf M} \|_F^2}.
\]
Plugging the lsat inequality in~\ref{unos} and optimizing over $\lambda$ we obtain
\[
\mathbb{P} \left(  \Xi >t  \right)  \leq \inf_{\lambda >0} e^{-\lambda t + 8  \lambda^2  \|{\bf M} \|_F^2} < e^{- \frac{t^2}{32 \|{\bf M} \|_F^2}}.
\]
\end{proof}

 \end{document}